\newtheorem{assumption}{Assumption}
\newtheorem{theorem}{Theorem}
\newtheorem{lemma}{Lemma}
\newtheorem{proposition}{Proposition}
\newtheorem{problem}{Problem}
 \newtheorem{Definition}{Definition}
\title{\LARGE \bf
Safety-Aware Learning-Based Control of Systems with Uncertainty Dependent Constraints (Extended Version$^{\star}$)
}
\author{Jafar Abbaszadeh Chekan and C\'edric Langbort
\thanks{This paper is based upon work supported by the National Science Foundation under Grant No 2007604 as well as a C3.AI Digital Transformation Institute Grant.}
\thanks{J.~A.~Chekan and C.~Langbort (emails: jafar2 \& langbort@illinois.edu) are with the Coordinated
  Science Laboratory and the Department of Aerospace
  Engineering at the University of Illinois at Urbana-Champaign (UIUC).}%
}
\begin{document}

\maketitle
\thispagestyle{empty}
\pagestyle{empty}

\begin{abstract}
In this paper, we tackle the problem of safely stabilizing an originally (partially) unknown system while ensuring that it does not leave a prescribed 'safe set'  whose structure itself depends on the unknown part of the system's dynamics. For this aim, we apply a popular approach based on control Lyapunov functions (CLF), control barrier functions (CBF), and Gaussian processes (to build confidence set around the unknown term), which has proved successful in the known-safe set setting. However, with the mentioned safety set structure, we witness the introduction of higher-order terms to be estimated and bounded with high probability using only system state measurements. In this paper, we build on the recent literature on Gaussian Processes (GPs) and reproducing kernels to address the challenge and show how to modify the CLF-CBF-based approach correspondingly to obtain safety guarantees. To overcome the intractability of verification of these conditions on the continuous domain, we apply discretization of the state space and use Lipschitz continuity properties of dynamics to derive equivalent CLF and CBF certificates in discrete state space. Finally, we discuss the strategy for the control design aim using the derived certificates.
\end{abstract}

\section{INTRODUCTION}


Guaranteeing the stability of a closed-loop system as learning is taking place is a crucial challenge in applying learning theory-based controls. The presence of additional constraints on state and/or input signals further compounds matters. For example, in robotics applications, such constraints could represent the geometric and dynamic constraints (e.g., walls and obstacles in a robot's environment) whose violation can cause severe damage to the system and its environment. Though most existing works trying to address the safety-aware learning-based control problem assume known safety constraints, in this work, our goal is to address the safety concern when the safety set itself depends on dynamic uncertainty. More precisely, our goal is to design an algorithm that enables joint stabilization of the system and invariance of a safety set while learning the uncertainty that affects both dynamics and the set. Managing safety in this setting brings complexities and challenges that need rigorous analysis, which we discuss after providing a quick review of the literature.

The stability concern is addressed by using the Control Lyapunov Function (CLF) constraints through which the system is pushed to reside in the region of attraction that is estimated and updated in time. In contrast to a large body of works in the literature trying to learn the Lyapunov function when the system dynamics is known  (see for example \cite{giesl2016approximation, richards2018lyapunov,ravanbakhsh2019learning}), we focus on deriving the region of attraction and constraining control design to satisfy the obtained set when the system dynamics is uncertain and Lyapunov function is given. Finding a safe region of attraction in real-time when the system dynamics is uncertain has received tremendous attention. For example, \cite{akametalu2014reachability} introduce a reachability-based approach to learning the safe region. Furthermore, \cite{berkenkamp2017safe} proposes a model-based RL which guarantees safety through the learning region of attraction in discrete time. An analogous study also has been carried out in a continuous time nonlinear system by applying the Gaussian process by \cite{berkenkamp2016safe}. Our analysis generalizes the analysis of \cite{berkenkamp2016safe} for the Exponential Control Lyapunov Function (ECLF) case.

As for the imposed state-dependent constraints satisfaction, it is usually addressed using Control Barrier Function constraints in control design. Recently, the control barrier function, a Lyapunov-like function, has contributed to the safety aspect of control systems mostly in known dynamic setting \cite{ames2016control, xiao2019control, nguyen2016exponential}. Inclusion of CBF type of constraints in control design guarantees forward invariance of a given set, i.e., starting from a given set, the system states will reside in it forward in time. CBF has shown its effectiveness in various applications such as robotics and automotive systems. Authors in \cite{ames2014control} have applied a CBF constraint in a quadratic program to address the adaptive cruise control problem (ACC), whereas later on \cite{xiao2019control} by presenting failure of first order CBF constraints through an example introduced a novel high order CBF (HCBF) to tackle this problem. In analogous to HCBF, which is introduced for the first time by \cite{nguyen2016exponential}, CLF has been extended to higher order scheme in \cite{wang2021chance} and  \cite{bahreinian2021robust} along the way of path planning in the polygonal environment using CBF and CLF. 

Most recently CBF has been combined with learning techniques to design control for safety critical systems \cite{taylor2020learning, emam2021safe, wabersich2021predictive}. Authors in \cite{cheng2019end} applied Gaussian Process (GP) to model the uncertainty using which they imposed an uncertainty-aware CBF constraint to ensure safety of an RL algorithm. 

GPs, is a tractable regression method to estimate an unknown function given a set of
(noisy) measurements of its values at some  points. GPs as a widely used nonparametric estimation technique that is widely applied in sequential decision making problems such as multi-armed bandit problem (\cite{williams2006gaussian}, \cite{srinivas2009gaussian}, and \cite{chowdhury2017kernelized}). This technique together with reproducing kernels are used to quantify the uncertainty of an unknown function through a confidence interval constructed using the mean and variance of estimation. Recently GP has been successfully applied to ROA estimation by \cite{berkenkamp2016safe}, \cite{berkenkamp2017safe} and CBF constraints construction for systems with known safety constraint by \cite{cheng2019end}.

As an extension to the orthodox research body with known safety constraints assumption, our work is concerned with addressing the control design problem when the dynamic uncertainty explicitly appears in the safety constraints. This class of constraints is usual in controls (see for example \cite{vignali2018data}).

Having uncertainty in the imposed constraints adds to the existing complexities of guaranteeing forward in-variance of them as this addition requires estimating terms with higher order derivatives of uncertainty. To overcome this challenge we extend the existing results in GPs and reproducing kernels in bandit literature, and build high probability confidence sets for these additional terms using only system state measurement. Using the obtained sets which are updated in real time we derive second order relative degree ECBF constraint. For safe control design we use these constraints together with ECLF constraint. 

The last challenge is computation-wise, as checking the obtained ECLF and ECBF certificates (in the continuous domain) in the control design procedure is a computational burden. We overcome this challenge by exploiting  Lipschitz continuity of dynamics which helps to equivalently check the constraints on a finite grid on state space. 

Our proposed methodology can successfully address safety for the setting tackled by \cite{wang2021learning} in which the dynamics is uncertain, the safety set is known, but higher order CBF certificates are needed. For the mentioned setting even though the model (uncertain dynamics but known safety constraints) is benign, deriving higher order CBF certificates requires the estimate of uncertainty derivatives, which is the challenge resolved by our study. To build the confidence sets of higher order derivatives we build on the techniques of \cite{chowdhury2017kernelized}.

We delineate the remainder of our paper as follows. First, we briefly restate the prominent basics of the literature for higher order ECBF and ES-CLF for known systems in Section \ref{subsecKnown}. Then, in Section \ref{Sec:problem statement}, we derive these certificates for the partially unknown system with uncertain safety set in the continuous domain given the confidence set of unknown term and its derivatives. Next, section \ref{sec:Gaussian} provides an overview of GPs and the construction of associated confidence sets. Then, section \ref{sec:DiscreteECLFandCBF}, to overcome computational burden, derives equivalent certificates in discrete domain using which we discuss a strategy for control design. Finally, we conclude the key achievements in Section \ref{Sec:Conclus}. The technical and complimentary proofs are provided in Appendix of \cite{chekan2022Safetyaware}.

\section{PRELIMINARIES} 
\label{Sec:preliminaries}

Consider a nonlinear input-affine control system of the form

\begin{equation} 
\dot{x}= \underbrace{f(x)+g(x)u}_\text{known model}+\underbrace{d(x)}_\text{unknown term} \label{eq:model}
\end{equation}
where $x\in \mathcal{X} \subset \mathbb{R}^n$, $f:\mathbb{R}^n\rightarrow \mathbb{R}^n$, $u\in \mathcal{U} \subset \mathbb{R}^m$, and $g:\mathbb{R}^n\rightarrow \mathbb{R}^{n\times m}$. The dynamics' model $f(x)$ and $g(x)\in \mathbb{R}^n$ are known a priori while $d(x)$ is unknown. Given a set
\begin{equation}\label{eq:set}
\mathcal{K}=\{x\in \mathbb{R}^n|\; h(x)+\mathcal{F}(d(x))\geq 0\},
\end{equation}
for some known $h:\mathbb{R}^n\rightarrow \mathbb{R}$ and known operator $\mathcal{F}:\mathbb{R}^n\rightarrow \mathbb{R}$, our high-level goal is to design a control signal $u$ that stabilizes the system while keeping the state in $\mathcal{K}$ in a sense to be made clear in the next sections. We emphasize that, unlike in existing works, the safety constraint defining set (\ref{eq:set}) is dependent on the a priori unknown part of model $d(x)$.

In order to define our goal more rigorously, we start by reviewing some tools for ensuring stability and (forward) invariance of safety set in the case of fully known dynamics and constraint sets by following the lead works \cite{nguyen2016exponential}, \cite{bahreinian2021robust} and \cite{wang2021chance}. We then detail our strategy to employ these tools using approximate representations of non-directly accessible quantities when part of the system dynamics is unknown and the safety set is uncertain. Our approach extends the lead of works such as \cite{berkenkamp2016safe} and \cite{berkenkamp2017safe} which approximate control Lyapunov function, and \cite{cheng2019end} that approximates a first order control barrier function. 

\subsection{ES-CLF and ECBF for Known Systems}\label{subsecKnown}

Let us assume, for the purposes of this subsection only, that the right hand side of (\ref{eq:model}) is fully known and that we are interested in stabilizing the system while rendering the set  $\mathcal{K}$ given by (\ref{eq:set}) forward invariant in closed-loop. Two notions, borrowed from \cite{bahreinian2021robust} and explicited in definitions \ref{def1}- \ref{def2}, are useful to this end. We refer the reader to this work and references therein for reminders regarding the classical notions of Lie derivatives, relative degree and transverse dynamics employed in the definitions.

\begin{Definition}{(ES-CLF):} A differentiable function $V: \mathcal{X} \rightarrow \mathbb{R}$ such that $V(x)\geq 0$ for all $x \in \mathcal{X}$ is called an Exponentially Stabilizing Control Lyapunov Function (ES-CLF) if there exists  $K_V>0$ such that, for all $x \in \mathcal{X}$, 
\begin{equation}
\label{eq:ECLF}
    \mathcal{L}_{f+d}V(x)+\mathcal{L}_gV(x)u+K_VV(x)\leq 0
\end{equation}
for some $u \in \mathcal{U}$.
\label{def1}
\end{Definition}

\begin{Definition} {(ECBF):}
 Let a function $H: \mathcal{X} \rightarrow \mathbb{R}$ have relative degree  $r_H$ with respect to the system (\ref{eq:model}) and let the transversal dynamics  be written as follows:
\begin{equation}
\begin{array}{ll}
\dot{\eta_H}=F \eta_H+G\mu \\
 H(x)=C\eta_H
\end{array}
\end{equation} 
where the transversal state $\eta_H$ is defined as
 \begin{equation}
\eta_H= \begin{pmatrix}
H(x)  \\
\mathcal{L}_{f+d}H(x)\\
\mathcal{L}^2_{f+d}H(x)\\
. \\
.\\
\mathcal{L}^{r_H-1}_{f+d}H(x)
\end{pmatrix}. \label{eq:Eta_h}
\end{equation}
  We say that $H$ is an Exponential Control Barrier Function (ECBF) if there exists a  $K_H\in \mathbb{R}^{r_H}$ such that, for all $x \in \mathbb{R}^n$ with $H(x) >0$, 
\begin{equation}
    \mathcal{L}^{r_H}_{f+d}H(x)+\mathcal{L}_g\mathcal{L}^{r_H-1}_{f+d}H(x)u+K_H^\top\eta_H\geq 0\ \label{eq:ECBF}
\end{equation}
for some control $u\in \mathcal{U}$.
\label{def2}
\end{Definition}

As explained in \cite{bahreinian2021robust}, the reason the notions of ES-CLF and ECBF are useful is that they provide a constructive way to obtain a stabilizing control law with desirable invariance properties as stated in the following propositions:
\\
\begin{proposition}
\label{prop1}
Let system (\ref{eq:model}) admit an ECBF $H$ such that the control law $\mu_H=-K_H\eta_H$ stabilizes the transverse dynamics, and let function $\pi: \mathbb{R}^n \rightarrow \mathcal{U}$ be such that $u = \pi(x)$ satisfies (\ref{eq:ECBF}) for all $x$. 
Then the control law $\pi$ renders the set $\mathcal{E} = \{x \in \mathbb{R}^n | H(x) \geq 0\}$ positively invariant in closed-loop, i.e., $x(t) \in \mathcal{E}$ for all $t \geq 0$, provided $x(0) \in \mathcal{E}$.
\end{proposition}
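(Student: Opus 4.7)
The plan is to exploit the linear structure that the transverse coordinates $\eta_H$ impose on the problem, reducing the nonlinear invariance question to a comparison-type argument on a stabilized chain of integrators. First, I would rewrite the evolution of $\eta_H$ along closed-loop trajectories in the standard companion form
\begin{equation*}
\dot{\eta}_H = A\,\eta_H + B\bigl(\mathcal{L}^{r_H}_{f+d}H(x) + \mathcal{L}_g\mathcal{L}^{r_H-1}_{f+d}H(x)\,u\bigr),
\end{equation*}
where $A$ is the $r_H\times r_H$ shift matrix and $B=(0,\dots,0,1)^T$, so that $H(x)=e_1^T\eta_H$ is simply the first coordinate of the transverse state.

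Next, I would substitute $u=\pi(x)$. By definition, $\pi$ enforces the ECBF inequality (\ref{eq:ECBF}) pointwise, so the bracketed term can be written as $-K_H^T\eta_H + \sigma(x)$ with $\sigma(x)\ge 0$. This converts the closed-loop transverse dynamics into the linear time-varying-forced system
\begin{equation*}
\dot{\eta}_H = (A - BK_H^T)\,\eta_H + B\,\sigma(x),
\end{equation*}
in which the matrix $F := A - BK_H^T$ is Hurwitz by the hypothesis that $\mu_h=-K_H\eta_H$ stabilizes the transverse dynamics, and the forcing $B\sigma(x)$ is entrywise non-negative (only the last row is nonzero, and it is $\geq 0$). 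Applying the variation-of-constants formula yields
\begin{equation*}
\eta_H(t) = e^{Ft}\eta_H(0) + \int_0^t e^{F(t-s)} B\,\sigma(x(s))\,ds,
\end{equation*}
which I would then read off in its first coordinate to obtain a formula for $H(x(t))$.

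The main obstacle, and the step requiring the most care, is converting this scalar representation into the pointwise statement $H(x(t))\ge 0$. The standard way to clear this hurdle is a Nagumo-style boundary argument: on $\partial\mathcal{E} = \{H=0\}$, the ECBF inequality combined with the companion structure forces the first nonzero derivative of $H$ along trajectories to have the correct sign, preventing the flow from crossing the boundary outward. Equivalently, one can use the fact that, for a Hurwitz companion matrix $F$ with suitably chosen $K_H$, the function $e_1^T e^{Ft}$ convolved with a non-negative input preserves non-negativity whenever $\eta_H(0)$ lies in the natural invariant cone induced by the eigenstructure of $F$. Either route, together with the assumption $x(0)\in\mathcal{E}$, delivers $H(x(t))\ge 0$ for all $t\ge 0$ and hence forward invariance of $\mathcal{E}$ under $\pi$.
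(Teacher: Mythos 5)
Your first three steps (companion form for $\eta_H$, substitution of (\ref{eq:ECBF}) to write the top-level forcing as $-K_H^T\eta_H+\sigma(x)$ with $\sigma(x)\geq 0$, and variation of constants) are exactly the skeleton of the standard argument in the references the paper itself defers to for this proposition --- the paper gives no proof of its own, citing \cite{nguyen2016exponential} and \cite{bahreinian2021robust}. But your final paragraph, which you correctly flag as the crux, is not a proof, and neither of the two routes you sketch can close the gap under the stated hypotheses. The Nagumo-style route fails outright for relative degree $r_H\geq 2$: on $\partial\mathcal{E}=\{H=0\}$ the ECBF inequality constrains only the $r_H$-th derivative of $H$ along the flow, not $\dot H$, so sub-tangentiality cannot be inferred. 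Concretely, for $r_H=2$ take a boundary point with $H(x(0))=0$ and $\mathcal{L}_{f+d}H(x(0))=-1$; then $\ddot H=-K_H^T\eta_H+\sigma(x)$ with $\sigma\geq 0$ is perfectly consistent with $H(x(t))\approx -t<0$ for small $t$, so the trajectory exits $\mathcal{E}$ even though (\ref{eq:ECBF}) holds everywhere and $F=A-BK_H^T$ is Hurwitz.

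The cone route has the same defect plus a second one. First, $x(0)\in\mathcal{E}$ gives only $e_1^T\eta_H(0)\geq 0$, i.e., nonnegativity of the first transverse coordinate; membership of $\eta_H(0)$ in the ``invariant cone induced by the eigenstructure of $F$'' is a strictly stronger condition that is nowhere granted, and it is precisely what the counterexample above violates. Second, nonnegativity of the convolution term requires the impulse response $e_1^Te^{Ft}B\geq 0$ for all $t\geq 0$, which Hurwitzness alone does not provide (complex closed-loop poles make it oscillate and change sign); it does hold when all poles are real and negative, since the impulse response is then a convolution of nonnegative decaying exponentials. The complete argument, as in the works the paper cites, places real poles $-p_1,\dots,-p_{r_H}$, defines $\nu_0=H$ and $\nu_i=\dot{\nu}_{i-1}+p_i\nu_{i-1}$ recursively, assumes $\nu_i(x(0))\geq 0$ for each $i$ (a genuine extra initial-condition requirement, folded into the ECBF definition of \cite{nguyen2016exponential}), and then applies the comparison lemma $r_H$ times to obtain $\nu_{i-1}(t)\geq e^{-p_it}\nu_{i-1}(0)\geq 0$ and finally $H(x(t))\geq 0$. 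You would need to import those pole-placement and initial-condition hypotheses (or read them into the phrase ``stabilizes the transverse dynamics''); with only Hurwitz stability and $H(x(0))\geq 0$, the step you flag is not merely hard --- the claimed implication is false.
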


\begin{proposition}
\label{prop2}
Let system (\ref{eq:model}) admit an ES-CLF $V$ with $V(0)=0$ and let function $\pi: \mathbb{R}^n \rightarrow \mathcal{U}$ be such that $u = \pi(x)$ satisfies (\ref{eq:ECLF}) for all $x \in \mathcal{C}(c) = \{ x \in \mathbb{R}^n \; | \; V(x) \leq c \}$. Then the control law $\pi$ renders the origin exponentially stable in closed-loop within the level set $\mathcal{C}(c)$.
\end{proposition}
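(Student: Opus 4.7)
The plan is to combine the ECLF inequality (\ref{eq:ECLF}) with the comparison lemma (i.e., Grönwall-type reasoning) along closed-loop trajectories, and then translate the resulting exponential decay of $V$ into an exponential bound on the state via standard positive-definiteness assumptions on $V$.

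First I would fix any $x(0)\in\mathcal{C}(c)$ and consider the closed-loop trajectory $x(t)$ under $u=\pi(x)$. Since by hypothesis $\pi(x)$ satisfies (\ref{eq:ECLF}) everywhere on $\mathcal{C}(c)$, as long as the trajectory stays in $\mathcal{C}(c)$ the time derivative of $V$ along it satisfies
\begin{equation*}
\dot V(x(t))=\mathcal{L}_{f+d}V(x(t))+\mathcal{L}_gV(x(t))\pi(x(t))\leq -K_V V(x(t)).
\end{equation*}
Applying the comparison lemma then gives $V(x(t))\leq V(x(0))e^{-K_V t}$.

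Next I would close the bootstrap loop that establishes forward invariance of $\mathcal{C}(c)$ (needed to keep the previous step valid for all $t\geq 0$). Because $V(x(0))\leq c$, the above bound yields $V(x(t))\leq c\,e^{-K_V t}\leq c$ on any maximal interval on which $x$ remains in $\mathcal{C}(c)$; continuity of $V$ and $x$ rules out exit from $\mathcal{C}(c)$ in finite time, so the interval is $[0,\infty)$ and the exponential decay of $V$ holds globally in time.

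Finally, to pass from exponential decay of $V$ to exponential stability of the origin, I would invoke the standard positive-definiteness bracketing $\alpha_1(\|x\|)\leq V(x)\leq \alpha_2(\|x\|)$ for some class-$\mathcal{K}_\infty$ functions $\alpha_1,\alpha_2$ (which is the usual implicit hypothesis on a CLF and is consistent with $V\geq 0$, $V(0)=0$; in the quadratic case $\alpha_i(r)=c_i r^2$ one recovers the textbook estimate $\|x(t)\|\leq\sqrt{c_2/c_1}\,\|x(0)\|\,e^{-K_V t/2}$). This step is the only place where genuine care is needed: the definition of ES-CLF in the paper only states $V\geq 0$ and $V(0)=0$, so the exponential-stability conclusion truly requires the implicit assumption that $V$ is positive definite and proper, and I would state this explicitly at the start of the proof. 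The rest reduces to routine comparison-lemma bookkeeping.
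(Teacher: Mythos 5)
Your proof is correct, and it is the standard comparison-lemma argument: the paper itself states Proposition \ref{prop2} without proof (deferring to the cited reference \cite{bahreinian2021robust}), and the argument it implicitly relies on is exactly the one you give --- $\dot V \leq -K_V V$ along closed-loop trajectories, Gr\"onwall/comparison to get $V(x(t)) \leq V(x(0))e^{-K_V t}$, the bootstrap showing $\mathcal{C}(c)$ is forward invariant so the bound persists for all $t \geq 0$, and class-$\mathcal{K}_\infty$ bracketing $\alpha_1(\|x\|) \leq V(x) \leq \alpha_2(\|x\|)$ to convert decay of $V$ into exponential stability of the origin. Your closing caveat is well taken and is in fact a genuine observation about the paper: Definition \ref{def1} only requires $V \geq 0$, and the proposition adds only $V(0)=0$, which is insufficient for the stated conclusion (a $V$ vanishing on a whole neighborhood of the origin satisfies these hypotheses while certifying nothing about the state); the positive-definiteness and properness bounds you make explicit --- quadratic in the standard ES-CLF setting, yielding $\|x(t)\| \leq \sqrt{c_2/c_1}\,\|x(0)\|\,e^{-K_V t/2}$ --- are indeed needed and are silently assumed by the paper.
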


In order to keep the complexity of the derivations to a minimum, while emphasizing the main steps of our approach, we henceforth restrict ourselves to ECBF of relative degree $r_H=1$ as well as to a state dimension of $n=1$. Since we are interested in making the set $\mathcal{K}$ defined in (\ref{eq:set}) invariant, we also specialize the results above to the case where the ECBF of interest is $H(x) = h(x)+\mathcal{F}(d(x))$. Under these additional assumptions  the conditions to be checked by a control law $u = \pi(x)$ to achieve the guarantees of propositions \ref{prop1} and \ref{prop2} are
\begin{equation}
\begin{array}{ll}
 &\dot{\mathscr{H}}(x,\pi(x)):=(\frac{\partial h(x)}{\partial{x}}+\frac{\partial \mathcal{F}(d(x))}{\partial{d(x)}} \frac{\partial d(x)}{\partial{x}})(f(x)+d(x))\\
&+(\frac{\partial h(x)}{\partial x}+\frac{\partial \mathcal{F}(d(x))}{\partial{d(x)}} \frac{\partial d(x)}{\partial{x}}) g(x) \pi(x)+\\
&K_H(h(x)+\mathcal{F}(d(x)))\geq 0
\end{array} 
\label{SafetyCond}
\end{equation}
and
\begin{align}
  \nonumber \mathcal{\dot{V}}(x,\pi(x)):=&\frac{\partial V(x)}{\partial x}^\top (f(x)+g(x)\pi(x)\\
  &+d(x))+K_VV(x) \leq 0, \label{eq:stabilityy}
\end{align}
respectively. A further simplification we will make from here on, for the sake of brevity in derivation, is focus on the case $\mathcal{F}(d(x))=M d(x)$ with a known $M\in \mathbb{R}$. Similar results can be derived in the general case, at the cost of more intricate formulae.

\subsection{ES-CLF and ECBF for Partially Unknown Systems with Uncertain Safety Set
} \label{Sec:problem statement}
 
 When system (\ref{eq:model}) is partially unknown, conditions (\ref{SafetyCond}) and (\ref{eq:stabilityy}) cannot be used as-is, because they involve the unknown term $d(x)$, its derivative $\partial d(x)/\partial x$ and their product. To estimate these quantities, we apply Gaussian process, a non-parametric approach, to build a high probability confidence interval around them relying on state measurements. While the technical details of this approach are provided in the next section, the most important aspect of this approximation process for our purposes is that, after 
$n-1$ measurements, all quantities of interest are guaranteed to lie in so-called confidence sets  with high probability, i.e. 
\begin{equation}
\label{set_GP}
\mathcal{G}(x) \in I^{n-1}_{\mathcal{G}}(x) := \left\{y \; | \;|y-\mu_{\mathcal{G}}^{n-1}(x)|\leq\sqrt{\beta^n_{\mathcal{G}}} \sigma^{n-1}_{\mathcal{G}}(x)\right\}
\end{equation}
with probability at least $1-\delta$.

In (\ref{set_GP}), $\mathcal{G}(x)$ designates $d(x)$, $\partial d(x)/\partial x$ or $d(x)\partial d(x)/\partial x$, while $\mu_{\mathcal{G}}^{n-1}(x)$ and $\sigma^{n-1}_{\mathcal{G}}(x)$  are quantities that can be computed directly from the $n-1$ measurements. The factor $\beta^n_{\mathcal{G}}$'s is a function of $\delta$ to be derived in Theorems 1-3.

%

%

Because we only have access to an estimate of the term $d(x)$, we similarly cannot use the level set $\mathcal{E}$ of Proposition \ref{prop1} as-is for control design. Instead, we introduce 
\begin{align}
    \mathcal{E}_n= \left \{x\in \mathbb{R}^n\; |\; h(x)+ \min_{y \in I^n_{d}(x)} \mathcal{F}(y) \geq 0 \right\} \label{eq:safeSetnMeasu}
\end{align}

which is constructed by using the confidence set of $d(x)$ estimated with $n$ samples. This level set is an inner approximation of the true level set $\mathcal{E}$, (i.e., $\mathcal{E}_n\subseteq \mathcal{E}$), whose size increases with $n$ (i.e., $\mathcal{E}_{n-1}\subseteq \mathcal{E}_n$).  In turn, as more and more data is collected, an increasingly tighter approximation of the desired invariant set is constructed.



With these facts and notations in place, we can now rigorously state our control design goal, namely:
\begin{problem}
\label{prob1}
Given an ES-CLF $V$ and the GP estimates of $d(x)$, find a control policy $\pi$ which, for every $n$,  maximizes the size of the region $(\mathcal{E}_n\cap \mathcal{C}(c)) \cap \mathcal{X}$.
\end{problem}

%
 %

A sufficient condition to render this latter set invariant, provided $V$ is a ES-CLF, is to ensure that a upper-bound to (\ref{eq:stabilityy}) is non-positive for all $x$ belonging to it. Using GP estimates of $d(x)$, $\partial d(x)/\partial x$, and $d(x)\partial d(x)/\partial x$, we can show that $\dot{\mathcal{H}}(x, \pi(x))\in [l_{\dot{\mathcal{H}}}^n(x, \pi(x)),\;u_{\dot{\mathcal{H}}}^n(x, \pi(x))]$ where
\begin{align}
      \nonumber & l_{\dot{\mathscr{H}}}^n(x, \pi(x))=\frac{\partial h}{\partial x}\big(f(x)+\mu_d^n(x)+g(x)\pi(x)\big)-\\
    \nonumber &\bigg|\frac{\partial h}{\partial x}\bigg|\sqrt{\beta_d^n}\sigma_d^n(x) +M\bigg(f(x)+g(x)\bigg)\mu_{\partial d}^n-\\
     \nonumber &\bigg|M(f(x)+g(x) \pi(x))\bigg|\sqrt{\beta_{\partial d}^n}\sigma_{\partial d}^n+M\mu_{d\partial d}-\\
   \nonumber& |M|\sqrt{\beta_{d\partial d}^n}\sigma_{d\partial d}^n+K_H h(x)+\min_{y\in I^n_d} K_H M y
   \end{align}
and
\begin{align}
   \nonumber & u_{\dot{\mathscr{H}}}^n(x, \pi(x))=\frac{\partial h}{\partial x}\big(f(x)+\mu_d^n(x)+g(x)\pi(x)\big)+\\
    \nonumber & \bigg|\frac{\partial h}{\partial x}\bigg|\sqrt{\beta_d^n}\sigma_d^n(x)+M\bigg(f(x)+g(x)\bigg)\mu_{\partial d}^n+\\
   \nonumber &\bigg|M(f(x)+g(x) \pi(x))\bigg|\sqrt{\beta_{\partial d}^n}\sigma_{\partial d}^n+M\mu_{d\partial d}+\\
     \nonumber&|M|\sqrt{\beta_{d\partial d}^n}\sigma_{d\partial d}^n+K_H h(x)+\max_{y\in I^n_d} K_HM y
\end{align}


It is thus enough, for the purposes of rendering $\mathcal{E}_n$ invariant, to ensure that  $l_{\dot{\mathscr{H}}}^n(x, \pi(x))\geq 0$ holds within all $\mathcal{E}_n$.  Likewise, ensuring that $V$ is an ES-CLF can be achieved by ensuring that an upper-bound to (\ref{eq:stabilityy}) is non-positive for all $x \in \mathcal{C}(c)$. Further using GP estimates, one can show that  $\dot{\mathcal{V}}(x, \pi(x))\in [l_{\dot{\mathcal{V}}}^n(x, \pi(x)),\;u_{\dot{\mathcal{V}}}^n(x, \pi(x))]$ where
 
 \begin{align}
  & l_{\dot{\mathcal{V}}}^n(x, \pi(x))=\mu_{n,\dot{V}}+K_VV(x)-\sqrt{\beta_d^n} \sigma_{\dot{V},n}\\
  & u_{\dot{\mathcal{V}}}^n(x, \pi(x))=\mu_{n,\dot{V}}+K_VV(x)+\sqrt{\beta_d^n}(x) \sigma_{\dot{V},n}
 \end{align}
 and
\begin{align}
    &\mu_{n,\dot{V}}=\frac{\partial V(x)}{\partial x}^\top (f(x)+\mu_{d}^n(x)+g(x)u(x))\\
   & \sigma_{n, \dot{V}}=\bigg|\frac{\partial V(x)}{\partial x}^\top\bigg| \sigma_d^{n}(x).
\end{align}

Overall, we are thus led to determining a control law $\pi$ such that (1) $(\mathcal{E}_n\cap \mathcal{C}(c)) \cap \mathcal{X}$ is as large as possible and (2) $l_{\dot{\mathscr{H}}}^n(x, \pi(x))\geq 0$ and  $u_{\dot{\mathscr{V}}}^n(x, \pi(x))\leq 0$ holds within that region.
 %



However, it is not possible to check these latter two conditions for all $x\in \mathcal{C}(c)$. To circumvent this challenge we exploit the Lipschitz continuity properties of $\dot{\mathcal{V}}$ and $\dot{\mathscr{H}}$ to evaluate the sign of these certificates in a continuous domain  by evaluating $u^n_{\dot{\mathcal{V}}}\leq 0$ and $l_{\dot{\mathscr{H}}}^n\geq 0$ at a finite number of points.

With our goal and approach now hopefully made clear, we devote the next section to the construction of these approximate sets using Gaussian Process theory, while the final one considers discretization of the obtained conditions, under some Lipschitz-continuity assumption.

\section{Gaussian Process} \label{sec:Gaussian}

\subsection{Basic results}

In order to be able to learn the unknown term $d(x)$ using reproducing kernels we need some standard assumptions. First, we assume that $d(x)$ has low complexity as measured under the norm of a \textit{reproducing kernel Hilbert space} (RKHS). An RKHS $\mathcal{H}_k(\mathcal{X})$, which includes functions of the form $d(x)=\Sigma_i \alpha_i k(x,x_i)$ (where $\alpha_i\in \mathbb{R}$ and representer $x_i\in \mathcal{X}$), is a complete subspace of square-integrable function space, $\mathcal{L}_2$. $k(.,.)$ is symmetric, positive definite kernel function and $\langle d,k(x,.) \rangle=d(x)$ $\forall d(x)\in \mathcal{H}_k (\mathcal{X})$. Note that $\|d\|_k^2=\langle d, d\rangle_k$ is a measure for smoothness of $d(x)$ with respect to the kernel $k(.,.)$ as 

\begin{align*}
    |d(x)-d(y)|=&|\langle d, k(x,.)-k(y,.) \rangle|\\
    \leq &\|d\|_k\|k(x,:)-k(y,.)\|_k.
\end{align*}


We further make the following standard assumptions. 

\begin{assumption} \label{assum:measuablity}
 $\hat{d}(x)=\dot{x}-f(x)-g(x)u+\omega$ can be measured at all time, and the measurement noise $\omega$ is
 drawn independently from normal distribution $\mathcal{N}(0,\sigma^2)$ and is conditional $R-$sub-Gaussian.
 
\end{assumption}
This assumption is not restrictive as we only need to have access to $x$, $u$, and $\dot{x}$. The latter one is obtained by discrete-time approximation when the state $x$ is observable. Furthermore, conditional $R-$sub-Gaussian assumption for the noise $\omega_{t}$, defined by
\begin{align*}
   \mathbb{E}[e^{\lambda\omega_t}|\;x_{0:t},\omega_{0:t-1}]\leq e^{\frac{\lambda^ 2 R^2}{2}} \forall t\geq 0,\; \forall \lambda \in \mathbb{R},
\end{align*}
is a standard assumption in bandit literature (see \cite{abbasi2011improved, agrawal2013thompson, vakili2021optimal}) and controls community (see \cite{abbasi2011regret, cohen2019learning, lale2022reinforcement}).

\begin{assumption}
\label{assu2}
The kernel function  $k(x, x^{\prime})$ is continuously differentiable in both variables and bounded in $x\in \mathcal{X}$. In addition, there exists a RKHS upper bound for the function $d$, i.e., some $B_d>0$ such that $\|d\|_k\leq B_d$ .
\end{assumption}

For a noisy sample  $y_n=[\hat{d}(x_1),...,\hat{d}(x_n)]^\top$ at point $A_n=\{x_1,...,x_n\}$, and $U_n=\{u_1,...,u_n\}$, the posterior when $\hat{d}(x)=\dot{x}-f(x)-g(x)u+\omega$ with $\omega 	\sim \mathcal{N}(0,\sigma^2)$ is a GP distribution with $\mu^n_d(x)$, covariance $k^n_d(x,x^{\prime})$, and variance ${\sigma^n_d}(x)$:

\begin{equation}
\begin{array}{ll}
\mu^n_d(x)=k^n_d(x)^\top (K^n_d+\sigma ^2 I)^{-1}y_n \\
k^n_d(x,x^{\prime})=k(x,x^{\prime})-k^n_d(x)^\top (K^n_d+\sigma^2 I)^{-1}k^n_d(x^{\prime})
\\
{\sigma^n_d}(x)=\sqrt{k^n_d(x,x)}
\end{array} \label{GaussianD}
\end{equation}
where $k^n_d(x)=[k(x_1,x),...,k(x_n,x)]^\top$ and $K^n_d=[k(x,x^{\prime})]_{x,x^{\prime}}$ with $x,x^{\prime}\in A_n$.

As announced in the previous section, under Assumption \ref{assum:measuablity} and \ref{assu2}, it is possible to build a confidence set from these measurements, in which $d(x)$ is guaranteed to lie with high probability. This is the content of the following theorem.

\begin{theorem} 
Suppose that $\|d\|_k\leq B_d$ and let $\omega_t$ be $R$-gaussian noise and let $d:\mathcal{X}\rightarrow \mathbb{R}$ where $\mathcal{X} \subset \mathbb{R}$, then for all $n\geq 1$ and $x\in \mathcal{X}$ with probability at least $1-\delta$
\begin{align}
  \nonumber & |d(x)-\mu_d^{n-1}(x)|\leq\\ &(B_d+\alpha R\sqrt{2(\gamma_{n-1}+1+\ln (1/\delta))} \sigma^{n-1}_d(x)
  \label{BBB}
\end{align}
where $\gamma_{n-1}$ stands for the maximum information gain after $n-1$ rounds and $\alpha=1/\sigma ^2$.
\end{theorem}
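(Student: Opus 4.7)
The plan is to follow the self-normalized concentration program for kernelized bandits (see \cite{chowdhury2017kernelized}, building on \cite{abbasi2011improved}), specialized to the posterior in (\ref{GaussianD}). Write $\phi(x)=k(x,\cdot)\in\mathcal{H}_k(\mathcal{X})$, so that by the reproducing property $d(x)=\langle d,\phi(x)\rangle_k$, and set $\mathbf{d}_{n-1}=(d(x_1),\dots,d(x_{n-1}))^\top$, $\boldsymbol{\omega}_{n-1}=(\omega_1,\dots,\omega_{n-1})^\top$, so that $y_{n-1}=\mathbf{d}_{n-1}+\boldsymbol{\omega}_{n-1}$. Substituting into the closed form (\ref{GaussianD}) and separating noise from signal gives the decomposition
\begin{equation*}
d(x)-\mu_d^{n-1}(x)=\langle d,\,\phi(x)-\Phi_{n-1}^\top(K^{n-1}_d+\sigma^2 I)^{-1}k_d^{n-1}(x)\rangle_k - k_d^{n-1}(x)^\top(K^{n-1}_d+\sigma^2 I)^{-1}\boldsymbol{\omega}_{n-1},
\end{equation*}
where $\Phi_{n-1}$ is the (formal) feature matrix whose $i$th column is $\phi(x_i)$. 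I would then bound the two terms separately.

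For the \textbf{bias} term, Cauchy--Schwarz in $\mathcal{H}_k$ together with Assumption~\ref{assu2} gives an upper bound of $B_d$ times the RKHS norm of the residual $\phi(x)-\Phi_{n-1}^\top(K^{n-1}_d+\sigma^2 I)^{-1}k_d^{n-1}(x)$. A direct expansion, combined with the identity $K^{n-1}_d(K^{n-1}_d+\sigma^2 I)^{-1}=I-\sigma^2(K^{n-1}_d+\sigma^2 I)^{-1}$, shows that the square of this norm equals $k(x,x)-k_d^{n-1}(x)^\top(K^{n-1}_d+\sigma^2 I)^{-1}k_d^{n-1}(x)$ minus a nonnegative correction, hence is at most $\sigma_d^{n-1}(x)^2$. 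The bias term is therefore bounded by $B_d\,\sigma_d^{n-1}(x)$.

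The harder step is the \textbf{noise} term, where the sub-Gaussian increments $\omega_t$ appear weighted by data-dependent kernel coefficients. For each $f\in\mathcal{H}_k$, I would introduce the exponential process $M_t(f)=\exp\bigl(\alpha\langle f,\sum_{s\leq t}\omega_s\phi(x_s)\rangle_k-\tfrac{\alpha^2 R^2}{2}\|f\|_k^2\bigr)$, which under Assumption~\ref{assum:measuablity} is a nonnegative supermartingale with respect to the natural filtration. Following the method of mixtures of \cite{abbasi2011improved}, I would integrate $M_t$ against an isotropic Gaussian prior on $\mathcal{H}_k$ (defined via its finite-dimensional marginals) and apply Ville's maximal inequality to the mixed process. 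After the Gaussian integration, the normalizing constant becomes a log-determinant of $I+\alpha K^{n-1}_d$, and invoking the definition of the maximum information gain $\tfrac{1}{2}\ln\det(I+\alpha K^{n-1}_d)\leq\gamma_{n-1}$ gives, with probability at least $1-\delta$,
\begin{equation*}
\bigl|k_d^{n-1}(x)^\top(K^{n-1}_d+\sigma^2 I)^{-1}\boldsymbol{\omega}_{n-1}\bigr|\leq \alpha R\sqrt{2(\gamma_{n-1}+1+\ln(1/\delta))}\,\sigma_d^{n-1}(x).
\end{equation*}
A triangle inequality then assembles the two bounds into (\ref{BBB}).

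The main technical obstacle is the method-of-mixtures step: carrying it out in the infinite-dimensional RKHS requires a careful cylinder-set construction of the Gaussian mixing measure and a justification for exchanging the supremum over $f\in\mathcal{H}_k$ with the prior integral, so that Ville's inequality can be invoked on the mixed process. The algebraic identification of the residual norm with $\sigma_d^{n-1}(x)$ and the Cauchy--Schwarz step for the bias are, by contrast, routine RKHS manipulations.
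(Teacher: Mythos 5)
Your decomposition is the same as the paper's: split $d(x)-\mu_d^{n-1}(x)$ into a bias term handled by Cauchy--Schwarz in $\mathcal{H}_k$ and a noise term handled by self-normalized concentration, then convert the log-determinant into the information gain $\gamma_{n-1}$. Your bias step is correct and matches the paper's $\Gamma_{12}$ computation in the appendix (the residual-norm identity $\|\phi(x)-\Phi_{n-1}^\top(K_d^{n-1}+\sigma^2 I)^{-1}k_d^{n-1}(x)\|_k^2=\sigma_d^{n-1}(x)^2-\sigma^2\|(K_d^{n-1}+\sigma^2 I)^{-1}k_d^{n-1}(x)\|^2\leq \sigma_d^{n-1}(x)^2$ checks out). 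The real divergence is in the noise term: the paper does \emph{not} re-derive the method of mixtures --- its proof of this theorem defers to the proof of Theorem~\ref{thm:confidenceSetDer}, which invokes the self-normalized inequality of \cite{chowdhury2017kernelized} (their Theorem~1) as a black box, then bounds $\ln\det\big((1+\eta)I+K_n\big)$ by $\gamma_n+\tfrac{1}{2}\eta n$ and sets $\eta=2/T$. You propose to reprove that inequality from scratch, and that is where your sketch breaks.

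Two concrete problems. First, the process $M_t(f)=\exp\big(\alpha\langle f,\sum_{s\leq t}\omega_s\phi(x_s)\rangle_k-\tfrac{\alpha^2R^2}{2}\|f\|_k^2\big)$ is \emph{not} a supermartingale: conditional $R$-sub-Gaussianity of $\omega_t$ yields the per-step factor $\exp\big(\tfrac{\alpha^2R^2}{2}\langle f,\phi(x_t)\rangle_k^2\big)$, so the compensator must be the accumulated, data-dependent quadratic $\tfrac{\alpha^2R^2}{2}\sum_{s\leq t}\langle f,\phi(x_s)\rangle_k^2$, not a static multiple of $\|f\|_k^2$; with your fixed penalty the exponent drifts upward along feature directions that recur, and Ville's inequality is unavailable. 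Second, even after fixing the compensator, the mixing step you describe cannot be carried out as stated: an isotropic Gaussian with identity covariance on an infinite-dimensional $\mathcal{H}_k$ is not a countably additive measure (the identity is not trace-class), so the cylinder-set marginals do not assemble into a prior you can integrate $M_t$ against. This is exactly the obstruction that \cite{chowdhury2017kernelized} circumvent with their double mixture over time-extended feature vectors with $\eta$-jitter --- which is also where the $(1+\eta)$ regularization and the additive $+1$ inside your target $\sqrt{2(\gamma_{n-1}+1+\ln(1/\delta))}$ actually come from; a clean one-shot mixture would not produce that constant. Repairing both points amounts to reproducing the Chowdhury--Gopalan proof, so the efficient fix is to do what the paper does: cite their Theorem~1 for the bound on $\omega_{1:n}^\top\big((K_n+\eta I)^{-1}+I\big)^{-1}\omega_{1:n}$ and keep the rest of your argument, which then assembles into (\ref{BBB}) as claimed.
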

\begin{proof}
The proof follows similar steps of \cite{chowdhury2017kernelized} with appropriate modifications. Indeed, while in \cite{chowdhury2017kernelized} the measurement noise is considered to be drawn independently from a normal distribution of the form $\mathcal{N}(0,\nu^2\lambda)$ with free parameters $\nu$ and $\lambda$, we work with assuming a noise $\mathcal{N}(0,\sigma^2)$, which is more  customary in the controls community (\cite{berkenkamp2016safe}, \cite{berkenkamp2017safe}, \cite{cheng2019end}). The necessary modifications are explicited in the proof of Theorem \ref{thm:confidenceSetDer}.

\end{proof}

The maximum information gain $\gamma_n$ appearing in (\ref{BBB}) is an information theoretic measure which is defined as follows:

\begin{align}
   \gamma_n=\max_{A\subset \mathcal {X}, |A|=n} I(y_A;d_A) \label{eq:maxInfGain}
\end{align}
where the mutual information gain 
\begin{align*}
   I(y_A;d_A)=H(y_A)-H(y_A|\; d_A),
\end{align*}
is the difference between the marginal entropy of measurements $y_A$, $H(y_A)$ and the conditional entropy $H(y_A|\; d_A)$ of observations given the valuation of $d(x)$ on the set $A$. The maximum information gain $\gamma_n$ is a measure representing the maximum reduction about uncertainty about function $d(x)$ after $n$ measurements $y_n$'s. This quantity is defined in the following lemma (a proof of which can be found, e.g., in \cite{KevinOnline} and \cite{chowdhury2017kernelized}).

\begin{lemma}
Let the measurements at points $A_n=\{x_1,...,x_n\}$  be $y_n=[\hat{d}(x_1),...,\hat{d}(x_n)]^\top$ and let $d_n=[d(x_1),...,d(x_n)]^\top$ denote the function values, then the information gain in these points is defined as follows:
\begin{align}
    I(y_n;d_n)=\frac{1}{2}\sum_{s=1}^{n} \ln (1+\frac{{\sigma_d^n}^2}{\sigma^2})
\end{align}
where ${\sigma_d^n}^2$ is given by (\ref{GaussianD}).
\end{lemma}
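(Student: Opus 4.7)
The plan is to exploit the joint Gaussianity of $(y_n, d_n)$ under the GP prior, so that every entropy appearing in the decomposition $I(y_n;d_n) = H(y_n) - H(y_n \mid d_n)$ admits a closed-form log-determinant expression. First I would note that, conditional on $d_n$, the measurement vector is $y_n = d_n + \omega$ with $\omega \sim \mathcal{N}(0,\sigma^2 I)$ by Assumption \ref{assum:measuablity}, so $H(y_n \mid d_n) = \tfrac{n}{2}\ln(2\pi e \sigma^2)$. Marginally, under the GP prior, $y_n \sim \mathcal{N}(0, K_n + \sigma^2 I)$ with $K_n = [k(x_i,x_j)]_{i,j}$, and hence $H(y_n) = \tfrac{1}{2}\ln\det\!\bigl(2\pi e (K_n + \sigma^2 I)\bigr)$. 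Subtracting gives the compact form
\[
I(y_n; d_n) \;=\; \tfrac{1}{2}\ln\det\!\bigl(I + \sigma^{-2} K_n\bigr).
\]

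Next, I would turn this log-determinant into the desired sum via the chain rule for mutual information, $I(y_n; d_n) = \sum_{s=1}^{n} I(y_s; d_n \mid y_{1:s-1})$, which isolates the contribution of each new observation. Repeating the single-step computation conditionally on $y_{1:s-1}$ and using the fact that, by the recursion in (\ref{GaussianD}), the posterior of $d(x_s)$ given the first $s-1$ measurements is Gaussian with variance $(\sigma_d^{s-1}(x_s))^2$, the same derivation yields
\[
I(y_s; d_n \mid y_{1:s-1}) \;=\; \tfrac{1}{2}\ln\!\bigl(1 + \sigma^{-2} (\sigma_d^{s-1}(x_s))^2\bigr).
\]
Summing over $s=1,\dots,n$ reproduces the expression stated in the lemma, with the understanding that the symbol ${\sigma_d^n}^2$ appearing in the statement is to be interpreted as the one-step-ahead posterior variance $(\sigma_d^{s-1}(x_s))^2$ at the $s$-th sampling point.

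The main obstacle, and the step I would handle most carefully, is justifying the telescoping that passes from $\ln\det(I + \sigma^{-2}K_n)$ to $\sum_{s=1}^n \ln\!\bigl(1 + \sigma^{-2}(\sigma_d^{s-1}(x_s))^2\bigr)$. My plan here is a recursive Schur-complement argument: writing $K_s + \sigma^2 I$ in block form around the last row and column and invoking the posterior-variance formula from (\ref{GaussianD}), one obtains the one-step identity $\det(K_s + \sigma^2 I) = \bigl(\sigma^2 + (\sigma_d^{s-1}(x_s))^2\bigr)\det(K_{s-1} + \sigma^2 I)$, and iterating it produces exactly the claimed product inside the logarithm. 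A fully expanded version of this telescoping is available in \cite{KevinOnline} and \cite{chowdhury2017kernelized}, which I would cite rather than reproduce in full.
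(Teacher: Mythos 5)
Your proof is correct and is essentially the argument the paper itself relies on: the paper gives no proof of this lemma, deferring instead to \cite{KevinOnline} and \cite{chowdhury2017kernelized}, and your entropy decomposition $I(y_n;d_n)=H(y_n)-H(y_n\mid d_n)$ into $\tfrac{1}{2}\ln\det\bigl(I+\sigma^{-2}K_n\bigr)$, followed by the chain-rule/Schur-complement telescoping $\det(K_s+\sigma^2 I)=\bigl(\sigma^2+(\sigma_d^{s-1}(x_s))^2\bigr)\det(K_{s-1}+\sigma^2 I)$, is exactly the standard derivation found in those references. You are also right to read the summand as the one-step-ahead posterior variance $(\sigma_d^{s-1}(x_s))^2$: as printed in the lemma, ${\sigma_d^n}^2$ does not depend on the summation index $s$, which is a notational slip in the statement rather than a flaw in your argument.
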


\subsection{Estimating derivatives and products from function measurements}

As mentioned above, we need to build a confidence set around $\frac{\partial d(x)}{\partial x}$ and $d(x)\frac{\partial d(x)}{\partial x}$ by just relying on the measurements of $d(x)$. To this end, we build on existing results in the bandit literature for GP \cite{srinivas2009gaussian, chowdhury2017kernelized} and provide the following sequence of results.

The first lemma gives the mean and covariance value of $\frac{\partial d(x)}{\partial x}$ relying merely on the data $y_n$.

\begin{lemma}
Suppose $d(x)$ has the posterior gaussian distribution (\ref{GaussianD}), then its first derivative with respect to $x$ has a gaussian distribution with mean $\mu_{\partial d}^{n-1}(x)$ and variance ${\sigma^n_{\partial d}}(x)$, defined as follows:

\begin{equation}
\begin{array}{ll}
\mu_{\partial d}^{n-1}(x)=(\frac{\partial k_d^n(x) }{\partial x}) ^\top (K^n_d+\sigma ^2 I)^{-1}y_n \\
{\sigma^n_{\partial d}}(x)=\sqrt{k^n_{\partial d}(x,x)}
\end{array} \label{eq:Moshtagh}
\end{equation}
where 
\begin{equation*}
 k^n_{\partial d}(x,x^{\prime})=\frac{\partial^2 k(x,x^{\prime})}{\partial x \partial x^{\prime}}-\frac{\partial k^{\top}_d(x)}{\partial x} (K+\sigma I)^{-1} \frac{\partial k_d(x)}{\partial x}.  \end{equation*}
\end{lemma}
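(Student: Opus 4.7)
The plan is to exploit two facts: (i) differentiation is a bounded linear operator on the RKHS under Assumption \ref{assu2}, so the derivative of a Gaussian process remains a Gaussian process, and (ii) the posterior of a jointly Gaussian pair can be written down explicitly via the standard conditioning formula. I would first verify well-posedness (the partial derivatives of $k$ exist and are continuous, by Assumption \ref{assu2}, hence $d$ is mean-square differentiable and $\partial d/\partial x$ is a well-defined Gaussian process), and then mechanically compute the relevant covariances and invoke conditioning.

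Concretely, the first step would be to compute the prior joint distribution of the scalar random variable $Z:=\partial d(x)/\partial x$ and the vector of noisy observations $y_n=d_n+\omega_n$, where $\omega_n\sim\mathcal{N}(0,\sigma^2 I)$ is independent of $d$. Since the operations ``evaluate at $x_i$'' and ``differentiate at $x$'' are bounded linear functionals on $\mathcal{H}_k(\mathcal{X})$, their joint distribution under the GP prior on $d$ is Gaussian with zero mean and cross-covariances obtained by interchanging the linear operators with the covariance kernel:
\begin{equation*}
\mathrm{Cov}\!\left(Z,d(x_i)\right)=\frac{\partial k(x,x_i)}{\partial x},\qquad
\mathrm{Var}(Z)=\frac{\partial^2 k(x,x')}{\partial x\,\partial x'}\bigg|_{x'=x}.
\end{equation*}
Using independence of $\omega_n$ and $d$, the cross-covariance with the observations is unchanged, i.e.\ $\mathrm{Cov}(Z,y_n)=\partial k_d^n(x)/\partial x$, while $\mathrm{Cov}(y_n,y_n^\top)=K_d^n+\sigma^2 I$.

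The second step would be to apply the standard Gaussian conditioning identity: for jointly Gaussian $(Z,Y)$ with zero means and covariance blocks $(\Sigma_{ZZ},\Sigma_{ZY},\Sigma_{YY})$, one has $Z\mid Y\sim\mathcal{N}\!\left(\Sigma_{ZY}\Sigma_{YY}^{-1}Y,\;\Sigma_{ZZ}-\Sigma_{ZY}\Sigma_{YY}^{-1}\Sigma_{YZ}\right)$. Plugging in the quantities computed above immediately yields the claimed posterior mean $\mu_{\partial d}^{n-1}(x)=(\partial k_d^n(x)/\partial x)^\top(K_d^n+\sigma^2 I)^{-1}y_n$ and the claimed covariance expression $k^n_{\partial d}(x,x')$, so that ${\sigma^n_{\partial d}}(x)=\sqrt{k^n_{\partial d}(x,x)}$. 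The fact that the whole posterior remains Gaussian is inherited from the joint Gaussianity of $(Z,y_n)$.

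The main obstacle (mild, but worth flagging) is justifying the interchange of differentiation and the expectation/inner-product operations, in particular the identity $\mathrm{Cov}(\partial d(x)/\partial x,d(x_i))=\partial k(x,x_i)/\partial x$. This rests on mean-square differentiability of $d$, which under Assumption \ref{assu2} follows because the mixed partial $\partial^2 k(x,x')/\partial x\,\partial x'$ exists and is continuous on $\mathcal{X}\times\mathcal{X}$; this is a classical result for GPs with $C^{1,1}$ kernels and can be cited rather than reproved. Once this is in place, the remainder of the proof is a direct algebraic consequence of Gaussian conditioning applied to the augmented covariance matrix.
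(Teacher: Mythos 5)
Your proposal is correct, and it is essentially the argument the paper itself relies on: the paper's ``proof'' of this lemma is a one-line citation to McHutchon's note on differentiating Gaussian processes, whose derivation is exactly your route --- differentiation is a linear operation, so $\bigl(\partial d(x)/\partial x,\, y_n\bigr)$ is jointly Gaussian with cross-covariance $\partial k_d^n(x)/\partial x$ and observation covariance $K_d^n+\sigma^2 I$, and the standard Gaussian conditioning identity then yields the stated mean and variance (your care in grounding the covariance interchange in mean-square differentiability, rather than in sample paths lying in $\mathcal{H}_k$, is the right way to make the cited derivation rigorous).
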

\begin{proof}
For the derivation, see \cite{mchutchon2013differentiating}.
\end{proof}
Note that the superscripts $n$ or $n-1$ denote the number of samples or rounds.

Before providing a high probability confidence set for the estimate of $\partial d(x)/\partial x$, we need the following preliminaries. Recalling the reproducing property of kernel $k$, i.e., $\langle d,k(x,.) \rangle=d(x)$ $\forall d\in \mathcal{H}_k$ and $x\in \mathcal{X}$, we are interested to know the condition under which the reproducing property holds for the first derivative of $d(x)$, i.e., $\langle d,\frac{\partial k(x,.)}{\partial x} \rangle=\frac{\partial d(x)}{\partial x}$. First, we need the following definition

\begin{Definition}
(Mercer Kernel) Let $\mathcal{X}$ be a separable metric space. A kernel $k:\mathcal{X}\times \mathcal{X}\rightarrow \mathbb{R}$ is called a Mercer Kernel, if it is a continuous, symmetric and positive semi-definite function.
\end{Definition}

The following lemma is a specific case of Theorem 1 from \cite{zhou2008derivative} that gives required condition for reproducing property of $\partial d(x)/\partial x$.

\begin{lemma} \label{lem:DiifReprocuc}
Let $k:\mathcal{X}\times \mathcal{X}\rightarrow \mathbb{R}$ be a Mercer kernel such that, in addition,  $k$ is $C^{2}$ in both variables, then for any $x\in \mathcal{X}$, $\partial k(x,.)/\partial x\in \mathcal{H}_k$ and first order derivative reproducing property holds true, i.e.,
\begin{align}
    \frac{\partial d(x)}{\partial x}=\langle d,\frac{\partial k(x,.)}{\partial x} \rangle\quad \forall x\in \mathcal{X},\quad d\in \mathcal{H}_k.
\end{align}
\end{lemma}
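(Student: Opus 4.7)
The plan is to establish both conclusions, namely $\partial k(x,\cdot)/\partial x \in \mathcal{H}_k$ and the derivative reproducing identity, via a single differentiation-through-the-inner-product argument of the kind in Zhou (2008). The central object is the difference quotient
$$g_h(\cdot) = \frac{k(x+h,\cdot) - k(x,\cdot)}{h},$$
which already belongs to $\mathcal{H}_k$ for every sufficiently small nonzero $h$, being a finite linear combination of kernel sections. The strategy is to show that $\{g_h\}$ converges in the RKHS norm to $\partial k(x,\cdot)/\partial x$ as $h\to 0$, and then to translate that norm convergence into an inner-product identity using continuity of $\langle\cdot,\cdot\rangle_k$.

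First I would show $\{g_h\}$ is Cauchy in $\mathcal{H}_k$. Using the standard reproducing identity $\langle k(u,\cdot),k(v,\cdot)\rangle_k = k(u,v)$, one obtains
$$\langle g_h,g_{h'}\rangle_k = \frac{k(x+h,x+h') - k(x+h,x) - k(x,x+h') + k(x,x)}{h\,h'}.$$
Because $k$ is $C^2$ in both variables, a second-order Taylor expansion of the numerator at $(x,x)$ shows that this quantity tends to the mixed partial $\partial^2 k(u,v)/\partial u\partial v$ evaluated at $(x,x)$, uniformly for $h,h'$ in a small punctured neighborhood of $0$. Expanding $\|g_h-g_{h'}\|_k^2 = \|g_h\|_k^2 - 2\langle g_h,g_{h'}\rangle_k + \|g_{h'}\|_k^2$ then forces all three terms to the same limit and the whole expression to zero, so by completeness of $\mathcal{H}_k$ there is a limit $\phi\in\mathcal{H}_k$. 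Since point evaluation is a bounded functional on an RKHS, norm convergence implies pointwise convergence, and pointwise one has $g_h(y)\to \partial k(x,y)/\partial x$ by the very definition of the partial derivative. Hence $\phi = \partial k(x,\cdot)/\partial x$, which establishes the membership claim.

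With membership in hand, the reproducing property of $k$ applied to $g_h$ yields, for any $d \in \mathcal{H}_k$,
$$\langle d, g_h\rangle_k = \frac{d(x+h) - d(x)}{h}.$$
Passing to the limit $h\to 0$ on the left using continuity of the inner product and the norm convergence $g_h\to \partial k(x,\cdot)/\partial x$ in $\mathcal{H}_k$ gives $\langle d,\partial k(x,\cdot)/\partial x\rangle_k$; in particular the right-hand side must converge, so $d$ is differentiable at $x$ with derivative equal to that inner product, which is exactly the derivative reproducing identity.

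The main obstacle I anticipate is the Cauchy-sequence step: one must carefully balance the $1/(h h')$ factor against a second-order Taylor remainder of $k$ and verify that the Mercer/$C^2$ assumption is strong enough for the limit to be independent of how $h$ and $h'$ are taken to $0$. Once that bookkeeping is done, the rest of the argument is a clean application of reproducing-property identities and continuity of the inner product, and the result follows as a specialization of the more general Zhou (2008) theorem to first-order derivatives.
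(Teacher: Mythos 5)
Your proof is correct and takes essentially the same route as the paper, which supplies no argument of its own but invokes the lemma as a special case of Theorem 1 of Zhou (2008) --- whose proof is precisely your difference-quotient scheme: Cauchy-ness of $g_h$ in $\mathcal{H}_k$ via the second mixed difference of $k$ (your Taylor step is most cleanly done by two applications of the mean value theorem, giving $\langle g_h,g_{h'}\rangle_k=\frac{\partial^2 k}{\partial u\,\partial v}(\xi,\eta)\to\frac{\partial^2 k}{\partial u\,\partial v}(x,x)$ as $h,h'\to 0$ independently), identification of the norm limit with $\partial k(x,\cdot)/\partial x$ through bounded point evaluations, and passage to the limit in $\langle d,g_h\rangle_k=(d(x+h)-d(x))/h$ by continuity of the inner product. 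The only detail worth adding is the routine one-sided modification of the quotient at boundary points of $\mathcal{X}$; otherwise nothing substantive is missing, and your argument even yields the slightly stronger conclusion that every $d\in\mathcal{H}_k$ is automatically differentiable, exactly as in the cited theorem.
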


We use this result to prove following theorem that gives high probability interval for the estimation of $\partial d(x)/\partial x$.
\begin{theorem} \label{thm:confidenceSetDer}
Suppose that $k\in C^2(\mathcal{X}\times \mathcal{X})$,  and let $\omega$ be zero mean and bounded by $\sigma$, then for all $n\geq 1$ and $x\in \mathcal{X}$ with probability at least $1-\delta$
\begin{align}
   \nonumber &|\frac{\partial d(x)}{\partial x}-\mu_{\partial d}^{n-1}(x)|\leq\\
    &(B_{d}+\alpha R\sqrt{2(\gamma_{n-1}+1+\ln (1/\delta))} \sigma_{\partial N}^{n-1}(x)
\end{align}
where $\alpha=1/\sigma ^2$.
\end{theorem}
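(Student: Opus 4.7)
The plan is to mirror the proof of Theorem 1 essentially verbatim, with the evaluation functional $d \mapsto d(x)$ replaced by the derivative evaluation functional $d \mapsto \partial d(x)/\partial x$. The key enabler is Lemma \ref{lem:DiifReprocuc}: since $k\in C^2(\mathcal{X}\times\mathcal{X})$, we have $\partial k(x,\cdot)/\partial x \in \mathcal{H}_k$ and the derivative reproducing property $\partial d(x)/\partial x = \langle d,\partial k(x,\cdot)/\partial x\rangle_k$ holds. This means the entire RKHS apparatus used in Theorem 1 transfers to derivative estimation, provided we consistently substitute the derivative representer for the ordinary one.

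First I would write the posterior mean in the RKHS inner-product form
\[
\mu_{\partial d}^{n-1}(x) = \bigl\langle d,\, P_{n-1}\tfrac{\partial k(x,\cdot)}{\partial x}\bigr\rangle_k \;+\; \bigl(\tfrac{\partial k_d^{n-1}(x)}{\partial x}\bigr)^\top (K_d^{n-1}+\sigma^2 I)^{-1}\omega_{n-1},
\]
where $P_{n-1}$ is the appropriate projection operator arising from the regularized kernel inverse and $\omega_{n-1} = (\omega_1,\dots,\omega_{n-1})^\top$. Combining with Lemma \ref{lem:DiifReprocuc} and applying Cauchy--Schwarz in $\mathcal{H}_k$ yields the classical bias/variance split
\[
\bigl|\tfrac{\partial d(x)}{\partial x} - \mu_{\partial d}^{n-1}(x)\bigr| \;\le\; \|d\|_k\,\sigma_{\partial d}^{n-1}(x) \;+\; \bigl|\bigl(\tfrac{\partial k_d^{n-1}(x)}{\partial x}\bigr)^\top (K_d^{n-1}+\sigma^2 I)^{-1}\omega_{n-1}\bigr|,
\]
and the first term is immediately bounded by $B_d\,\sigma_{\partial d}^{n-1}(x)$ by Assumption \ref{assu2}.

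The second, noise-driven, term is where the Abbasi-Yadkori self-normalized martingale concentration of \cite{chowdhury2017kernelized} enters. I would instantiate it with the feature sequence $\{\partial k(x_t,\cdot)/\partial x\}_{t=1}^{n-1}$ (which lies in $\mathcal{H}_k$ by Lemma \ref{lem:DiifReprocuc}) and the $R$-sub-Gaussian noise of Assumption \ref{assum:measuablity}. This produces, with probability at least $1-\delta$, a bound of the form $\alpha R\sqrt{2(\gamma_{n-1}+1+\ln(1/\delta))}\,\sigma_{\partial d}^{n-1}(x)$, where the appearance of $\alpha = 1/\sigma^2$ reflects the reparametrization from the $\mathcal{N}(0,\nu^2\lambda)$ noise model of \cite{chowdhury2017kernelized} to the $\mathcal{N}(0,\sigma^2)$ model used here; this is precisely the substitution announced in the proof sketch of Theorem 1. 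Adding the two bounds yields the claim.

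The main obstacle I anticipate is verifying that the maximum information gain $\gamma_{n-1}$ appearing in the self-normalized inequality may still be taken with respect to the \emph{base} kernel $k$ (as in the statement) rather than some derivative-dependent object. This is legitimate because the concentration step only requires the features $\partial k(x_t,\cdot)/\partial x$ to lie in $\mathcal{H}_k$ and the noise to be sub-Gaussian adapted to the natural filtration; the log-determinant quantity controlling the martingale-bound radius is the same $\gamma_{n-1}$ associated with $k$, since the observations $y_t$ are still of $d$, not of $\partial d/\partial x$. A careful bookkeeping of this point, together with confirming that $\partial^2 k(x,x')/\partial x\partial x'$ is well defined on the diagonal (so that $\sigma_{\partial d}^{n-1}(x)$ is a valid standard deviation), completes the proof.
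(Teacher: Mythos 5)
Your proposal is correct and follows essentially the same route as the paper's appendix proof: split the error via $y_{1:n}=d_{1:n}+\omega_{1:n}$ into a bias term, whose reproducing-property residual vanishes by Lemma \ref{lem:DiifReprocuc} and whose regularization part is bounded by $B_d\,\sigma_{\partial d}^{n-1}(x)$ through Cauchy--Schwarz in $\mathcal{H}_k$, and a noise term handled by the self-normalized concentration inequality of \cite{chowdhury2017kernelized} with the base-kernel information gain $\gamma_{n-1}$ and the $\alpha=1/\sigma^2$ reparametrization. One minor slip: in the concentration step the historical features remain the ordinary representers $k(x_t,\cdot)$ (only the query point carries the derivative representer $\partial k(x,\cdot)/\partial x$, which enters through the factor $\sigma_{\partial d}^{n-1}(x)$ after Cauchy--Schwarz), but your closing remark that $\gamma_{n-1}$ is the base-kernel quantity because the observations are of $d$ already captures the correct bookkeeping.
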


\begin{proof}
Proof can be found in Appendix of \cite{chekan2022Safetyaware}.
\end{proof}






To estimate the $d(x)\frac{\partial d(x)}{\partial x}$, we first estimate $d^2(x)/2$ by GP (assuming $d^2(x)/2\in \mathcal{H}_k$), then obtain a high probability confidence interval for $d(x)\frac{ \partial d(x)}{\partial x}$ by following the same steps of Theorem 1, Lemma 2, and Theorem 3. For this, given Assumption \ref{assum:measuablity} we have access to the measurement of $\hat{d}^2(x)/2=(\dot{x}-f(x)-g(x)u)^2/2+\omega$, where $\omega 	\sim \mathcal{N}(0,\sigma^2)$ is a Gaussian measurement noise.  

For a noisy sample  $\bar{y}_n=[\hat{d}^2(x_1)/2,...,\hat{d}^2(x_n)/2]^\top$ at point $A_n=\{x_1,...,x_n\}$, and $U_n=\{u_1,...,u_n\}$. the posterior is a GP distribution with $\mu^n_{d^2/2}(x)$, covariance $k^n_d(x,x^{\prime})$, and variance ${\sigma^n_d}(x)$:

\begin{equation}
\begin{array}{ll}
\mu^n_{d^2/2}(x)=k^n_{d^2/2}(x)^\top (K^n_d+\sigma ^2 I)^{-1}\bar{y}_n \\
k^n_{d^2/2}(x,x^{\prime})=k(x,x^{\prime})-k^n_d(x)^\top (K^n_d+\sigma^2 I)^{-1}k^n_{d^2/2}(x^{\prime})
\\
{\sigma^n_d}(x)=\sqrt{k^n_{d^2/2}(x,x)}
\end{array} 
\end{equation}
where $k^n_d(x)=[k(x_1,x),...,k(x_n,x)]^\top$ and $K^n_d=[k(x,x^{\prime})]_{x,x^{\prime}}$ with $x,x^{\prime}\in A_n$. Applying same derivation as of Lemma 2, the estimate of $d(x)\frac{\partial d(x)}{\partial x}$ is a gaussian distribution with mean $\mu_{d\partial d}^{n-1}(x)$ and variance ${\sigma^n_{d\partial d}}(x)$ and defined as follows:

\begin{equation}
\begin{array}{ll}
\mu_{d\partial d}^{n-1}(x)=(\frac{\partial k_{d^2/2}^n(x) }{\partial x}) ^\top (K^n_{d^2/2}+\sigma ^2 I)^{-1}\bar{y}_n \\
{\sigma^n_{d\partial d}}(x)=\sqrt{k^n_{d\partial d}(x,x)}
\end{array} 
\end{equation}
where 
\begin{equation*}
 k^n_{\partial d}(x,x^{\prime})=\frac{\partial^2 k(x,x^{\prime})}{\partial x \partial x^{\prime}}-\frac{\partial k^{\top}_d(x)}{\partial x} (K+\sigma I)^{-1} \frac{\partial k_d(x)}{\partial x}.  
 \end{equation*}
 \begin{theorem}
Suppose that $\|d^2/2\|_k\leq B_{d^2/2}$ and that $k\in C^2(\mathcal{X}\times \mathcal{X})$, and let $\omega$ satisfy Assumption \ref{assum:measuablity}, then for all $n\geq 1$ and $x\in \mathcal{X}$ with probability at least $1-\delta$
\begin{align}
   \nonumber &|d(x)\frac{\partial d(x)}{\partial x}-\mu_{d\partial d}^{n-1}(x)|\leq\\
    &(B_{d^2/2}+\alpha R\sqrt{2(\bar{\gamma}_{n-1}+1+\ln (1/\delta))} \sigma_{\partial d}^{n-1}(x)
\end{align}
where $\alpha=1/\sigma^2$.
\end{theorem}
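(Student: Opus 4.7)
The plan is to reduce the claim to Theorem \ref{thm:confidenceSetDer} applied to the auxiliary function $D(x) := d^2(x)/2$, exploiting the elementary identity $\partial D/\partial x = d(x)\,\partial d(x)/\partial x$. Under Assumption \ref{assum:measuablity}, the noisy samples $\bar{y}_n$ of $D(x)$ introduced just before the theorem are available, and the posterior $(\mu^{n-1}_{d^2/2}, \sigma^{n-1}_{d^2/2})$ is well defined. Moreover, by Lemma 2 applied to $D$, the quantity $\mu^{n-1}_{d\partial d}$ displayed in the theorem is precisely the spatial derivative of $\mu^{n-1}_{d^2/2}$, and $\sigma^{n-1}_{d\partial d}$ is the posterior standard deviation of that derivative. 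Hence the target quantity $|d(x)\partial d(x)/\partial x - \mu^{n-1}_{d\partial d}(x)|$ is nothing but a derivative-estimation error for the GP fitted to $D$.

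With this identification, the proof retraces the argument of Theorem \ref{thm:confidenceSetDer} verbatim with cosmetic substitutions: take $D$ in place of $d$, $\bar{y}_n$ in place of $y_n$, and the maximum information gain $\bar{\gamma}_{n-1}$ of the GP associated with $D$ in place of $\gamma_{n-1}$. Concretely, I would first invoke Theorem 1 on $D$ (assuming $D \in \mathcal{H}_k$ with $\|D\|_k \le B_d$, as implicit in the statement) to obtain a high-probability confidence bound on $D(x)$ itself. Then, using $k \in C^{2}(\mathcal{X} \times \mathcal{X})$ and Lemma \ref{lem:DiifReprocuc}, transfer the reproducing property to the derivative, namely $\partial D(x)/\partial x = \langle D, \partial k(x,\cdot)/\partial x\rangle$, and conclude via the same concentration chain that was used for $\partial d/\partial x$ in Theorem \ref{thm:confidenceSetDer}.

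The main obstacle is conceptual rather than computational: RKHS are in general not closed under pointwise products, so having $\|d\|_k \le B_d$ does not automatically yield $\|d^2/2\|_k \le B_d$ with the same constant, and a rigorous treatment requires either an extra hypothesis guaranteeing $d^2/2 \in \mathcal{H}_k$ (for instance, a kernel closed under products, or a direct norm bound on $d^2/2$) or a switch to a larger RKHS tailored to quadratic functionals of $d$. A secondary subtlety is that the effective measurement noise on $\bar{y}_n$, obtained by squaring $\hat{d}(x) = d(x)+\omega$, picks up cross terms $d(x)\omega$ and $\omega^2/2$ and is therefore no longer zero-mean $R$-sub-Gaussian in exactly the same sense as $\omega$; one must argue, for example by exploiting boundedness of $d$ on the compact set $\mathcal{X}$, that the composite noise remains conditionally sub-Gaussian with a possibly enlarged constant, so that the invocation of Theorem 1 still delivers the stated form with $R$ and $\alpha = 1/\sigma^2$ preserved.
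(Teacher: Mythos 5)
Your proposal takes essentially the same route as the paper: the paper's own proof is the one-line observation that, with the Mercer kernel for $d^2/2$ in $C^2(\mathcal{X}\times\mathcal{X})$, one repeats the steps of the proof of Theorem \ref{thm:confidenceSetDer} for the GP fitted to $d^2(x)/2$, whose spatial derivative is exactly $d(x)\,\partial d(x)/\partial x$. The two caveats you raise are real but are sidestepped by the paper itself, which directly assumes $d^2(x)/2\in\mathcal{H}_k$ and postulates the measurement model $\hat{d}^2(x)/2=(\dot{x}-f(x)-g(x)u)^2/2+\omega$ with fresh Gaussian noise $\omega$ (rather than squaring the noisy measurement of $d$), so your reduction matches the paper's argument and is, if anything, more explicit about its hidden hypotheses.
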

 \begin{proof}
Now by having mercer type of kernel $k_{d^2/2}\in C^2(\mathcal{X}\times \mathcal{X})$, and applying same steps of proof the claim holds true.
\end{proof}

\section{Exponential CLF-CBF Constraints in Discrete Domain}
\label{sec:DiscreteECLFandCBF}

The certificates obtained in the continuous domain are not useful in the control synthesis procedure, as checking them in the continuous domain is computationally burdensome. To circumvent this challenge, we exploit the Lipschitz continuity of the dynamics and the imposed constraint to introduce equivalent certificates in a discrete domain. These conditions enable the evaluation of the certificates in a finite number of points rather than the whole continuous domain throughout control design procedure. In other words, using the Lipschitz continuity properties stated by Assumption \ref{ass_lip}, we generalize our knowledge about safety into states we have not explored yet. 
\begin{assumption} \label{ass_lip}
The dynamics $f(.)$, $g(.)$, $h(.)$,  $\mathcal{F}(.)$, and $\partial\mathcal{F}(.)/\partial (.)$ are ${L}_f-$, ${L}_g-$, ${L}_h-$, ${L}_h-$, ${L}_{\mathcal{F}}-$, $L_{\partial \mathcal{F}}-$ Lipschitz continuous. Furthermore, to keep the closed loop Lipschitz continuous we restrict the control policy $\pi(.)$ to be in the set of all $L_{\pi}-$ Lipschitz functions.
\end{assumption}
Under these assumptions, the following lemmas, whose proofs have been provided in Appendix of \cite{chekan2022Safetyaware}, give the Lipschitz constants $L_{\dot{\mathscr{H}}}$ and $L_{\dot{\mathcal{V}}}$ of $\dot{\mathcal{V}}$ and $\dot{\mathcal{H}}$ statements.

\begin{lemma} \label{lem:lipstchitzH}
    The function $\dot{\mathscr{H}}(x,\pi(x))$ is Lipschitz continuous with constant $L_{\dot{\mathscr{H}}}$,
      \begin{align}
  \nonumber L_{\dot{\mathscr{H}}}=&(B_f+B_gB_{\pi}+B_d\|k\|_{\infty})(L_{\partial h}+L_dL_{\partial \mathcal{F}}+L_{\mathcal{F}}L_{\partial d})\\
   \nonumber&+(L_h+L_{\mathcal{F}}B_d\|\frac{\partial k}{\partial x}\|_{\infty})(L_f+L_d+B_gL_{\pi})\\
   &+|K_H|(L_h+L_{\mathcal{F}}L_d))|x-x^\prime|\label{eq:lipH}
    \end{align}
    where $L_V=\|\frac{\partial V}{\partial x}\|_{\infty}$ and $L_{\partial V}=\|\frac{\partial^2 V}{\partial x^2}\|_{\infty}$. \label{eq:lipH}
\end{lemma}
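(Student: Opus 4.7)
The plan is to view $\dot{\mathscr{H}}(x,\pi(x))$ as a sum of products of scalar-valued functions on $\mathcal{X}$ and to repeatedly apply the standard \emph{Lipschitz product rule}: for bounded Lipschitz $\phi,\psi$, the product $\phi\psi$ is Lipschitz with constant at most $L_\phi\,\|\psi\|_\infty+\|\phi\|_\infty\,L_\psi$. Writing
\[
\dot{\mathscr{H}}(x,\pi(x)) = A(x)\,B(x)+K_H\,H(x),
\]
with $A(x):=\frac{\partial h(x)}{\partial x}+\frac{\partial \mathcal{F}(d(x))}{\partial d(x)}\frac{\partial d(x)}{\partial x}$ and $B(x):=f(x)+d(x)+g(x)\pi(x)$, the overall bound decomposes as
\[
L_{\dot{\mathscr{H}}} \le L_A\,\|B\|_\infty+\|A\|_\infty\,L_B+|K_H|\,L_H,
\]
so the remaining work is to control each of these six quantities using Assumptions \ref{assu2} and \ref{ass_lip}.

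First, I would translate the RKHS hypothesis $\|d\|_k\le B_d$ into pointwise bounds. Since $d(x)=\langle d,k(x,\cdot)\rangle_k$ and, by Lemma \ref{lem:DiifReprocuc}, $\partial d(x)/\partial x=\langle d,\partial k(x,\cdot)/\partial x\rangle_k$, the Cauchy–Schwarz inequality yields $|d(x)|\le B_d\|k\|_\infty$ and $|\partial d(x)/\partial x|\le B_d\|\partial k/\partial x\|_\infty$, while the smoothness of $k$ inherited from Assumption \ref{assu2} provides the Lipschitz constants $L_d$ of $d$ and $L_{\partial d}$ of $\partial d/\partial x$ that appear in the statement.

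Next I would bound the constituent factors of the decomposition. Using $\|\partial h/\partial x\|_\infty\le L_h$ and $\|\mathcal{F}'\|_\infty\le L_\mathcal{F}$ together with the RKHS estimates, one obtains $\|A\|_\infty\le L_h+L_\mathcal{F}B_d\|\partial k/\partial x\|_\infty$ and $\|B\|_\infty\le B_f+B_d\|k\|_\infty+B_gB_\pi$. For $L_A$, the chain rule gives that $\mathcal{F}'\circ d$ is $(L_{\partial\mathcal{F}}L_d)$-Lipschitz, and the product rule then yields
\[
L_A\le L_{\partial h}+L_{\partial\mathcal{F}}L_d\cdot L_d+L_\mathcal{F}\cdot L_{\partial d} = L_{\partial h}+L_d^2L_{\partial\mathcal{F}}+L_\mathcal{F}L_{\partial d}.
\]
A direct sum/product argument gives an analogous expression for $L_B$, and the chain rule applied to $\mathcal{F}\circ d$ gives $L_H\le L_h+L_\mathcal{F}L_d$. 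Substituting everything back into the decomposition above and collecting terms produces (\ref{eq:lipH}).

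The main obstacle is bookkeeping rather than any deep step: one has to carefully track sup-norm and Lipschitz bounds for every factor in every product, in particular for the composite term $\mathcal{F}'(d(x))\,\partial d(x)/\partial x$ which needs both the chain rule on $\mathcal{F}'\circ d$ and the product rule with $\partial d/\partial x$, and one must remember to use the RKHS-derived bounds on $d$ and $\partial d/\partial x$ (which are not listed among the $L_\bullet$ constants in Assumption \ref{ass_lip}). The argument is purely deterministic: no probabilistic content from Section 3 is invoked, only the reproducing property via Lemma \ref{lem:DiifReprocuc} and the Lipschitz hypotheses of Assumption \ref{ass_lip}.
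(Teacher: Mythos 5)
Your proposal is correct and follows essentially the same route as the paper's proof: the identical decomposition $\dot{\mathscr{H}}=A(x)B(x)+K_H H(x)$, the same add--subtract/product-rule treatment of the composite term $\mathcal{F}'(d(x))\,\partial d(x)/\partial x$ (the paper carries this out explicitly inside one long chain of inequalities rather than stating the product rule as a lemma), and the same RKHS-derived bounds $|d(x)|\leq B_d\|k\|_{\infty}$ and $|\partial d(x)/\partial x|\leq B_d\|\frac{\partial k}{\partial x}\|_{\infty}$ via Cauchy--Schwarz and Lemma \ref{lem:DiifReprocuc}. One caveat: careful collection of your terms yields $L_B=L_f+L_d+B_gL_{\pi}+L_gB_{\pi}$ and a standalone additive term $|K_H|(L_h+L_{\mathcal{F}}L_d)$, which does not literally match the grouping printed in (\ref{eq:lipH}) (where the $|K_H|$ term sits inside the second factor and $L_dB_gL_{\pi}$ appears); the paper's own final inequality exhibits the same mismatch, so this is a bookkeeping/typographical defect in the stated constant rather than a gap in your argument.
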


\begin{lemma} \label{lem:Vliptchitz}
    The function $\dot{\mathcal{V}}(x,\pi (x))$ is Lipschitz continuous with constant $L_{\dot{\mathcal{V}}}$,
    \begin{align}
  \nonumber L_{\dot{\mathcal{V}}}=&(B_f+B_gB_{\pi}+B_d\|k\|_{\infty}  )L_{\partial V}\\
  &+(L_f+L_d+B_gL_{\pi}+K_V)L_V \label{eq:lipV}
    \end{align}
    
    where $L_V=\|\frac{\partial V}{\partial x}\|_{\infty}$ and $L_{\partial V}=\|\frac{\partial^2 V}{\partial x^2}\|_{\infty}$. 
\end{lemma}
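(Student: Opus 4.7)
The plan is to bound the increment $|\dot{\mathcal{V}}(x,\pi(x)) - \dot{\mathcal{V}}(y,\pi(y))|$ by a ``product-rule'' triangle inequality and then estimate each resulting factor either by a uniform bound (a $B$-constant) or by a Lipschitz constant from Assumption \ref{ass_lip}. Writing $\phi(x) := \partial V(x)/\partial x$ and $F(x,u) := f(x)+g(x)u+d(x)$, note that
\begin{equation*}
\dot{\mathcal{V}}(x,\pi(x)) = \phi(x)^{\top} F(x,\pi(x)) + K_V V(x),
\end{equation*}
so the first step is the decomposition
\begin{align*}
|\dot{\mathcal{V}}(x,\pi(x))-\dot{\mathcal{V}}(y,\pi(y))|
&\le \|\phi(x)-\phi(y)\|\,\|F(x,\pi(x))\|\\
&\quad + \|\phi(y)\|\,\|F(x,\pi(x))-F(y,\pi(y))\|\\
&\quad + K_V\,|V(x)-V(y)|.
\end{align*}

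For the first of these three terms, I would bound the ``magnitude'' factor $\|F(x,\pi(x))\|$ termwise: $\|f(x)\|\le B_f$, $\|g(x)\pi(x)\|\le B_g B_\pi$, and, most importantly, $|d(x)|\le B_d\|k\|_\infty$. The last inequality follows from Assumption \ref{assu2} together with the reproducing property: $|d(x)|=|\langle d,k(x,\cdot)\rangle_k|\le\|d\|_k\sqrt{k(x,x)}\le B_d\|k\|_\infty$. Combined with $\|\phi(x)-\phi(y)\|\le L_{\partial V}|x-y|$ (by definition of $L_{\partial V}$), this yields exactly the coefficient $(B_f+B_g B_\pi+B_d\|k\|_\infty)L_{\partial V}$ appearing in \eqref{eq:lipV}.

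For the second term I would add and subtract the intermediate points $f(y)$, $g(y)\pi(y)$, $d(y)$ inside the norm and apply Assumption \ref{ass_lip}: $\|f(x)-f(y)\|\le L_f|x-y|$, $\|g(x)\pi(x)-g(y)\pi(y)\|\le B_g L_\pi|x-y|$ (using the Lipschitz bound $L_\pi$ on the restricted policy class), and $|d(x)-d(y)|\le L_d|x-y|$. The last inequality comes from the RKHS smoothness identity $|d(x)-d(y)|\le\|d\|_k\,\|k(x,\cdot)-k(y,\cdot)\|_k$, whose right-hand side is $O(|x-y|)$ whenever $k\in C^{2}$ (as in Theorem \ref{thm:confidenceSetDer}). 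Using $\|\phi(y)\|\le L_V$ then gives the $(L_f+L_d+B_g L_\pi)L_V$ contribution, and the third term directly contributes $K_V L_V$ via $|V(x)-V(y)|\le L_V|x-y|$. Summing the three contributions yields \eqref{eq:lipV}.

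The argument is essentially bookkeeping; no step is analytically deep. The only subtle ingredient is the translation of the RKHS norm bound $\|d\|_k\le B_d$ and of the kernel regularity into, respectively, the uniform bound $|d(x)|\le B_d\|k\|_\infty$ and the Lipschitz estimate $|d(x)-d(y)|\le L_d|x-y|$ --- both of which are by now standard consequences of Assumption \ref{assu2}. This is exactly the same mechanism that powers the (considerably more involved) derivation of Lemma \ref{lem:lipstchitzH}, and the present lemma is essentially the specialization of that argument to the simpler, scalar-valued expression $\dot{\mathcal{V}}$.
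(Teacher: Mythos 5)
Your proposal is correct and follows essentially the same route as the paper's own proof: the identical three-term product-rule decomposition of $|\dot{\mathcal{V}}(x,\pi(x))-\dot{\mathcal{V}}(x^\prime,\pi(x^\prime))|$, with the magnitude factor bounded termwise by $B_f+B_gB_\pi+B_d\|k\|_\infty$ (via the same reproducing-property estimate $|d(x)|\le\|d\|_k\sqrt{k(x,x)}$) and the increment factors bounded by $L_f$, $B_gL_\pi$, $L_d$, and $K_VL_V$ exactly as in the paper's appendix. Your explicit appeal to the RKHS mechanism behind $L_d$ matches the paper's Lemma \ref{lem:Liptch const d, dd}, so there is nothing to flag.
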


In the statements (\ref{eq:lipH}) and (\ref{eq:lipV}), we still need to define $L_d$, $L_{\partial d}$ using kernel properties. The following lemma gives these constants. 

\begin{lemma} \label{lem:Liptch const d, dd}
    The functions $d(x)$ and $\frac{\partial d(x)}{\partial x}$ are Lipschitz continuous with constants $L_d$ and $L_{\partial d}$ 
    \begin{align}
      L_d =B_d\|\frac{\partial k}{\partial x}\|_{\infty}, \;\;L_{\partial d}=B_d\|\frac{\partial ^2k}{\partial x^2}\|_{\infty}
    \end{align}
\end{lemma}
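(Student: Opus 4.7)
The plan is to prove both Lipschitz bounds using a common three-step strategy: first express the relevant pointwise derivative of $d$ via a reproducing-kernel identity, then bound it by Cauchy--Schwarz in $\mathcal{H}_k$, and finally pass from a uniform pointwise derivative bound to a Lipschitz constant through the mean value theorem. Since Assumption~\ref{assu2} together with Lemma~\ref{lem:DiifReprocuc} already guarantees that $d$ is (at least) differentiable whenever $d \in \mathcal{H}_k$, the MVT step is justified without additional work.

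For $L_d$, I would first invoke Lemma~\ref{lem:DiifReprocuc} (first-order derivative reproducing) to write $\partial d(x)/\partial x = \langle d, \partial k(x,\cdot)/\partial x\rangle_k$. Cauchy--Schwarz together with $\|d\|_k \le B_d$ then gives
\begin{equation*}
\Bigl|\tfrac{\partial d(x)}{\partial x}\Bigr| \;\leq\; \|d\|_k \, \bigl\|\tfrac{\partial k(x,\cdot)}{\partial x}\bigr\|_k \;\leq\; B_d \, \Bigl\|\tfrac{\partial k}{\partial x}\Bigr\|_\infty,
\end{equation*}
where $\|\partial k/\partial x\|_\infty$ is read as $\sup_{x\in\mathcal{X}} \|\partial k(x,\cdot)/\partial x\|_k$, which is the only reading consistent with the preceding step (and coincides with $\sup_x \sqrt{\partial_x\partial_{x'} k(x,x')}$ evaluated on the diagonal). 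Since the bound is uniform in $x$, MVT delivers $|d(x) - d(y)| \leq B_d \|\partial k/\partial x\|_\infty \, |x-y|$, which is the claimed first constant.

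For $L_{\partial d}$, I would push the same argument one order higher. This requires a second-order analogue of Lemma~\ref{lem:DiifReprocuc}: namely $\partial^2 k(x,\cdot)/\partial x^2 \in \mathcal{H}_k$ and $\partial^2 d(x)/\partial x^2 = \langle d, \partial^2 k(x,\cdot)/\partial x^2\rangle_k$, obtained either by iterating the argument of \cite{zhou2008derivative} or by differentiating the first-order identity once more inside $\langle\cdot,\cdot\rangle_k$ (justified when the kernel is regular enough to commute the derivative past the inner product). Cauchy--Schwarz then yields $|\partial^2 d(x)/\partial x^2| \leq B_d \|\partial^2 k/\partial x^2\|_\infty$, and one final MVT application, this time to $\partial d/\partial x$, produces $L_{\partial d} = B_d\|\partial^2 k/\partial x^2\|_\infty$.

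The main obstacle is the smoothness gap between Assumption~\ref{assu2}, which only asks for $k \in C^2$, and the $L_{\partial d}$ bound, which tacitly needs the second-order derivative reproducing identity and therefore $k \in C^4(\mathcal{X}\times \mathcal{X})$, so that $d$ is classically twice differentiable and $\partial^2 k(x,\cdot)/\partial x^2$ truly lies in $\mathcal{H}_k$. I would either strengthen the kernel regularity hypothesis in the statement of the lemma (replacing $C^2$ by $C^4$) or insert a short auxiliary lemma extending Theorem~1 of \cite{zhou2008derivative} to the second order before closing the argument. A secondary, purely notational point that should be flagged in passing is that the symbols $\|\partial k/\partial x\|_\infty$ and $\|\partial^2 k/\partial x^2\|_\infty$ must be declared to denote suprema of the corresponding RKHS norms rather than pointwise suprema of kernel partials; otherwise the Cauchy--Schwarz step does not close to exactly the advertised form.
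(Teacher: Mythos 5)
Your proposal is correct in substance but takes a genuinely different route from the paper's proof. You bound the relevant derivative of $d$ \emph{pointwise and uniformly} --- $|\partial d(x)/\partial x| \leq B_d \|\partial k/\partial x\|_{\infty}$ via the derivative reproducing identity plus Cauchy--Schwarz --- and then convert this into a Lipschitz constant through the mean value theorem; for $L_{\partial d}$ you repeat the scheme one order higher, which forces you to postulate a second-order derivative reproducing identity and hence $k \in C^4(\mathcal{X}\times\mathcal{X})$. The paper instead estimates the \emph{increments} directly in $\mathcal{H}_k$: it writes $|d(x)-d(x^\prime)| = |\langle d,\, k(x,\cdot)-k(x^\prime,\cdot)\rangle_k| \leq \|d\|_k \|k(x,\cdot)-k(x^\prime,\cdot)\|_k$ and, for the derivative, $|\partial d(x)/\partial x - \partial d(x^\prime)/\partial x^\prime| \leq \|d\|_k \|\partial k(x,\cdot)/\partial x - \partial k(x^\prime,\cdot)/\partial x^\prime\|_k$, invoking only the first-order reproducing property of Lemma~\ref{lem:DiifReprocuc}, and then bounds the feature-map displacement by $\|\partial k/\partial x\|_{\infty}|x-x^\prime|$ (resp.\ $\|\partial^2 k/\partial x^2\|_{\infty}|x-x^\prime|$), so it never needs a second-order reproducing identity explicitly. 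What your route buys is explicitness: each step is individually rigorous once $C^4$ regularity is granted, and your flagged smoothness gap is well taken --- indeed it applies to the paper's own proof as well, since the paper's final inequality $\|\partial k(x,\cdot)/\partial x - \partial k(x^\prime,\cdot)/\partial x^\prime\|_k \leq \|\partial^2 k/\partial x^2\|_{\infty}|x-x^\prime|$ expands to a second difference of the mixed partial $\partial^2 k(x,x^\prime)/\partial x \partial x^\prime$, which is $O(|x-x^\prime|^2)$ only under additional smoothness of that mixed partial (essentially the same $C^4$-type hypothesis you make explicit, which the paper buries in the ambiguous norm $\|\partial^2 k/\partial x^2\|_{\infty}$). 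Your notational caveat that these sup-norms must be read as suprema of RKHS norms of the (derivative) feature maps, e.g.\ $\sup_x \|\partial k(x,\cdot)/\partial x\|_k$, is likewise consistent with how the paper actually uses the symbols. One minor remark: your MVT step is unproblematic here only because the paper restricts to $n=1$ (a scalar state); in higher dimension it would require $\mathcal{X}$ convex or a path argument, whereas the paper's direct increment estimate is dimension-agnostic in form.
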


With functions  $\dot{\mathcal{V}}$ and $\dot{\mathcal{H}}$'s Lipschitz constants in hand, we now proceed to derive the equivalent discrete domain certificates for control synthesis purpose. 

We let $\mathcal{X}_{\tau}$ denote discretization of continuous space $\mathcal{X}$ such that $|x-[x]_{\tau}|\leq \tau/2$ where $x\in \mathcal{X}$ and $[x]_{\tau}\in \mathcal{X}_{\tau}$. Then it is trivial to write:
\begin{align}
   \nonumber & |\dot{\mathcal{V}}(x, \pi(x))-\dot{\mathcal{V}}([x]_{\tau}, \pi([x]_{\tau}))|\leq L_{\dot{\mathcal{V}}}\tau\\
   &  |\dot{\mathcal{H}}(x, \pi(x))-\dot{\mathcal{H}}([x]_{\tau}, \pi([x]_{\tau}))|\leq L_{\dot{\mathcal{H}}}\tau. \label{eq:lipineq}
\end{align}


Given the discretization, and using the Lipschitz continuity of the dynamics and safety constraint, the following theorem summarizes the generalization of decrease condition on $\mathcal{V}$ and increase condition on $\mathcal{H}$ from discrete space $\mathcal{X}_{\tau}$ to continuous space $\mathcal{X}$.

\begin{theorem} \label{tm.Disctetization}
    Let $\mathcal{X}_{\tau}$ be a discretization of $\mathcal{X}$ and let
    \begin{align}
    \bar{\mathcal{E}}_n(e):= \bigg\{x\in \mathbb{R}\; |\;h(x)+ \min_{y \in I^n_{d}(x)} \mathcal{F}(y) \geq e \bigg\}.
    \end{align}   
 Assume that 
      \begin{align}
    &\nonumber u^n_{\dot{\mathcal{V}}}(x, \pi(x))\leq -L_{\dot{\mathcal{V}}}\tau\\
    &l^n_{\dot{\mathcal{H}}}(x, \pi(x))\geq L_{\dot{\mathcal{H}}}\tau.
    \end{align}   
    hold for all $x\in \big(\bar{\mathcal{E}}_n(e)\cap \mathcal{C}(c)\big)\cap \mathcal{X}_{\tau}$ with $c>0$, $e\geq 0$ and $n\in \mathbb{N}$, and $u=\pi(x)$, then $\dot{\mathcal{V}}(x, \pi(x))\leq 0$ and $\dot{\mathcal{H}}(x, \pi(x))\geq 0$ hold for $x\in\mathcal{E}\cap \mathcal{C}(c)$ with probability at least $1-\delta$. In other words $\mathcal{E}\cap \mathcal{C}(c)$ is the intersection of safe set and region of attraction for system (\ref{eq:model}) under policy $\pi$ with high probability.
\end{theorem}

\vspace{0.3cm}

Now, we are in the position to briefly illustrate one iteration of control design procedure, consisting of policy optimization and exploration phases. With $n$ number of measurements, we define the set of all state-action pairs $\mathcal{D}_n$ that satisfy the safety certificates as follows:
\begin{align}
    \mathcal{D}_n=\{(x,u)\in \mathcal{X}_{\tau}\times \mathcal{U}|\;&u^n_{\dot{\mathcal{V}}}(x, \pi(x))\leq -L_{\dot{\mathcal{V}}}\tau,\\
    \nonumber &l^n_{\dot{\mathcal{H}}}(x, \pi(x))\geq L_{\dot{\mathcal{H}}}\tau \}
\end{align}
using which we compute a control policy such that the size of intersected safe set and region of attraction,  $\mathcal{C}(c_n)\cap \bar{\mathcal{E}}_n(e_n)$ to be maximum. We refer to this procedure as policy optimization phase which is mathematically formulated by the following optimization problem:
\begin{align}\label{eq:Thm4}
   \nonumber \pi_n, c_n,e_n&= \operatorname*{argmax}_{c,e\in \mathbb{R}_{\geq 0}, \pi \in \prod_L}|\bar{\mathcal{E}}_{n}(e)\cap \mathcal{C}(c)|\\
   \nonumber &\textrm{s.t.}\;\; \forall x\in  \big(\bar{\mathcal{E}}_{n}(e_n)\cap \mathcal{C}(c_n)\big)\cap \mathcal{X}_{\tau},\\
 & (x,\pi(x))\in \mathcal{D}_{n}
\end{align}
where $\prod_L$ is the class of all possible policies and $|\bullet|$ stands for the cardinality of set $\bullet$. It is worthy to note that the problem (\ref{eq:Thm4}), thanks to Theorem  \ref{tm.Disctetization}, is computationally tractable and its solution $\pi_n$ guarantees the main goals: stabilization and forward invariance of the safety set.

By having in hand the policy $\pi_n$ and $\mathcal{S}_n:=\bar{\mathcal{E}}_{n}(e_n)\cap \mathcal{C}(c_n)$ through solving (\ref{eq:Thm4}), the next phase of control design is exploration, in which we aim to shrink the uncertainity of confidence intervals in $\mathcal{S}_n$ in order to expand the intersected region further, i.e., $|\mathcal{S}_{n+1}|>|\mathcal{S}_n|$. For this, we need to drive the system to the state $x_{n+1}\in \mathcal{S}_n$ in which we are less certain about the unknown term of the dynamics or equivalently about $\dot{\mathcal{V}}(x,u)$ and $\dot{\mathcal{H}}(x,u)$. 
We define the uncertainities of these estimates as follows
\begin{align*}
    &\Delta \dot{\mathcal{V}}(x,\pi_n(x))=u^{n}_{\dot{\mathcal{V}}}(x, \pi_n(x))-l^{n}_{\dot{\mathcal{V}}}(x, \pi_n(x))\\
    & \Delta \dot{\mathcal{H}}(x,\pi_n(x))=u^{n}_{\dot{\mathcal{H}}}(x, \pi_n(x))-l^{n}_{\dot{\mathcal{H}}}(x, \pi_n(x))
\end{align*}
which are size of corresponding confidence intervals.
Therefore, the state of interest $x_{n+1}$ to be visited is obtained by solving
\begin{align}
    x_{n+1}=\operatorname*{argmax}_{x\in \mathcal{S}_n}\big(\Delta \dot{\mathcal{V}}(x,\pi_n(x))+\Delta\dot{\mathcal{H}}(x,\pi_n(x))\big) \label{eq:BestNextState}
\end{align}
and in exploration phase by applying the backup policy $\pi_n$ we drive the system there. By letting $\mathcal{R}_{\pi_n}$ denote the true but unknown intersected safe region and region of attraction under the policy $\pi_n$, by Theorem \ref{tm.Disctetization}, it is straight forward to show that $\mathcal{S}_n\subseteq \mathcal{R}_{\pi_n}$ for any $n\geq 1$. For the sake of brevity we skip providing the detailed analysis here.

\vspace{-0.25cm}
\section{Summary and Conclusion} \label{Sec:Conclus}
This work addressed the problem of safely controlling a nonlinear, partially unknown system while guaranteeing prescribed safety constraints. We assumed that the uncertainty of dynamics could explicitly affect this prescribed set. Having access to noisy measurement of the uncertain part of dynamics and using GPs and reproducing kernel properties, we built confidence intervals for the unknown dynamics term, its first-order derivative, and their products. Proving that these confidence intervals estimate those terms with high probability, we used them to derive Es-CLF and ECBF certificates that, with high probability, guarantee stability and forward invariance of the prescribed safety set. Since these certificates are in a continuous domain, checking which in the whole domain may not be computationally efficient, we exploited the Lipschitz continuity properties of the system and, as such, derived equivalent conditions in the discrete domain. Finally, using the obtained certificates, we proposed a strategy that designs a control that fulfills our goals throughout learning the uncertainties while enlarging the intersection of ROA and the imposed-safety set. Future work will consider extending our analysis to high-dimensional system, in addition to explicitly addressing the simplifications introduced for brevity in the current paper.

\bibliographystyle{IEEEtran}
\bibliography{ref}

\newpage
\section*{APPENDIX} \label{append}
\textbf{Proof of Theorem \ref{thm:confidenceSetDer}}

\begin{proof}
Noting that $y_{1:n}=d_{1:n}+\omega_{1:n}$ and Using (\ref{eq:Moshtagh}), we can write:
\begin{align}
    &|\frac{\partial d(x)}{\partial x}-\mu_{\partial d}^{n-1}(x)|\\
   \nonumber  &=|\frac{\partial d(x)}{\partial x}-(\frac{\partial k_{n-1}(x) }{\partial x}) ^\top (K^n_d+\sigma ^2 I)^{-1}y_n|\\
   \nonumber  &\leq \underbrace{|\frac{\partial d(x)}{\partial x}-(\frac{\partial k_{n-1}(x) }{\partial x}) ^\top (K^n_d+\sigma ^2 I)^{-1}d_{1:n}|}_\text{$\Gamma_1$}\\
   \nonumber  &+\underbrace{|(\frac{\partial k_{n-1} }{\partial x}) ^\top (K^n_d+\sigma ^2 I)^{-1}\omega_{1:n}|}_\text{$\Gamma_2$}
\end{align}
By RKHS property we have $d(x)=\langle d,\ k(x,.) \rangle_k$ and letting $\phi (x)=k(x,)$ be feature map, then one can write $d(x)=d^\top \phi(x)$. Similarly, $k(x,x^{\prime})=\langle k(x,.),\ k(x^{\prime},.)\rangle_k=\phi(x)^\top \phi(x^{\prime})$ for all $x,x^{\prime}\in \mathcal{X}$. We also have $K_d^{n-1}=\Phi_{n-1}^\top\Phi_{n-1}$ where $\Phi_{n-1}=[\phi(x_1)^\top,...,\phi_(x_{n-1})^\top]^\top$. Likewise, $k_n(x)=\Phi_n\phi(x)$ for $x\in \mathcal{X}$ and $d_{1:n}=\Phi_nd$.

Now we first upper-bound the term $\Gamma_1$. Assuming that $\frac{\partial d}{\partial x}\in \mathcal{H}_k$, we have 
\begin{align}
   {\Gamma_1}&=|\frac{\partial d(x)}{\partial x}-\frac{\partial \phi}{\partial x}^\top\Phi_n^\top(\sigma^2 I+\Phi_n\Phi_n^\top )^{-1}\Phi_n d|\\
  \nonumber &=|\frac{\partial d(x)}{\partial x}-\frac{\partial \phi}{\partial x}^\top(\sigma^2 I+\Phi_n^\top\Phi_n )^{-1}\Phi_n^\top \Phi_n d|\\
  \nonumber &= |\frac{\partial d(x)}{\partial x}-\frac{\partial \phi}{\partial x}^\top d+\frac{\partial \phi}{\partial x}^\top(\sigma^2 I+\Phi_n^\top\Phi_n )^{-1}\sigma^2 d|\\
  \nonumber &\leq \underbrace{|\frac{\partial d(x)}{\partial x}-\frac{\partial \phi}{\partial x}^\top d|}_\text{$\Gamma_{11}$}+\underbrace{|\frac{\partial \phi}{\partial x}^\top(\sigma^2 I+\Phi_n^\top\Phi_n )^{-1}\sigma^2 d|}_\text{$\Gamma_{12}$}
  \end{align}
where is the second equality we applied 
\begin{align}
   \Phi_n^\top(\sigma^2 I+\Phi_n\Phi_n^\top )^{-1}=(\sigma^2 I+\Phi_n^\top\Phi_n )^{-1}\Phi_n^\top \label{eq:swap}
\end{align}
resulting from the fact that $(\sigma^2 I+\Phi_n\Phi_n^\top )$ is strictly positive definite. 

By mercer type of kernel assumption and applying Lemma \ref{lem:DiifReprocuc}, we have $\Gamma_{11}=0$.  



To bound the term $\Gamma_{12}$, we need following result first. One can write:

\begin{align*}
 (\sigma^2 I+\Phi_n^\top\Phi_n) \frac{\partial \phi(x)}{\partial x}= \Phi_n^\top \frac{\partial k_n}{\partial x} +\sigma^2 \frac{\partial \phi(x)}{\partial x}
\end{align*}
Multiplying both sides by $(\sigma^2 I+\Phi_n^\top\Phi_n)^{-1}$ and then applying (\ref{eq:swap}) results in

\begin{align}
    \frac{\partial \phi(x)}{\partial x}=&\Phi_n^\top(\sigma^2 I+\Phi_n\Phi_n^\top )^{-1} \frac{\partial k_n(x)}{\partial x}+\\
    \nonumber&+\sigma^2 (\sigma^2 I+\Phi_n^\top\Phi_n)^{-1}\frac{\partial \phi(x)}{\partial x}
\end{align}
which holds for any $x, x^{\prime}\in \mathcal{X}$. Then noting that
\begin{align}
  \frac{\partial ^2k(x,x^\prime)}{\partial x \partial x^{\prime}}= \frac{\partial\phi(x)}{\partial x}^\top \frac{\partial\phi(x^{\prime})}{\partial x^{\prime}} 
\end{align}
it yields 

\begin{align}
    \nonumber \frac{\partial ^2k(x,x^\prime)}{\partial x \partial x^{\prime}}=&\frac{\partial k_n(x)}{\partial x}^\top(\sigma^2 I+\Phi_n\Phi_n^\top )^{-1}\frac{\partial k_n(x^{\prime})}{\partial x^\prime}\\
   & \sigma^2 \frac{\phi(x)}{\partial x}^\top(\sigma^2 I+\Phi_n^\top\Phi_n )^{-1} \frac{\phi(x^{\prime})}{\partial x^{\prime}}. \label{eq:important}
\end{align}
Now we upper-bound the term $\Gamma_{12}$:
\begin{align}
    \nonumber \Gamma_{12}\leq &\|d\|_k \|(\sigma^2 I+\Phi_n^\top\Phi_n )^{-1}\sigma^2 \frac{\partial \phi(x)}{\partial x}\|_k\\
   \nonumber \leq & B_k\bigg(\sigma^2\frac{\partial \phi(x)}{\partial x}^\top(\sigma^2 I+\Phi_n^\top\Phi_n )^{-1}\\
\nonumber& \sigma^2 (\sigma^2 I+\Phi_n^\top\Phi_n )^{-1} \frac{\partial \phi(x)}{\partial x}\bigg)^{1/2}\\
\nonumber \leq& B_k\bigg(\sigma^2\frac{\partial \phi(x)}{\partial x}^\top(\sigma^2 I+\Phi_n^\top\Phi_n )^{-1}\\
\nonumber& (\sigma^2 I+\Phi_n^\top\Phi_n ) (\sigma^2 I+\Phi_n^\top\Phi_n )^{-1} \frac{\partial \phi(x)}{\partial x}\bigg)^{1/2}\\
=& B_d\sigma^n_{\partial d}
\end{align}
where in the last equality we applied (\ref{eq:important}) and the definition of $\sigma^n_{\partial d}$.

The upper bound for $\Gamma_2$ is obtained as follows:

\begin{align}
  \nonumber &\Gamma_2= |\frac{\partial \phi(x)}{\partial x}^\top \Phi_n^\top (\Phi_n\Phi_n^\top+\sigma^2 I)^{-1}\omega_{1:n}|\\
  \nonumber &=|\frac{\partial \phi(x)}{\partial x}^\top  (\Phi_n^\top\Phi_n+\sigma^2 I)^{-1}\Phi_n\omega_{1:n}|\\
  \nonumber &\leq \|(\Phi_n^\top\Phi_n+\sigma ^2 I)^{-\frac{1}{2}}\frac{\partial \phi(x)}{\partial x}\|_k\|(\Phi_n^\top\Phi_n+\sigma^2 I)^{-\frac{1}{2}}\Phi_n^\top \epsilon_{1:n}\|_k\\
  \nonumber & =\sqrt{\frac{\partial \phi(x)}{\partial x}^\top (\Phi_n^\top \Phi_n+\sigma^2 I)^{-1}\frac{\partial \phi(x)}{\partial x}}\times \\
   \nonumber & \sqrt{(\phi_n^\top \omega_{1:n})^\top (\phi_n \Phi_n^\top+\sigma^2 I)^{-1}\phi_n^\top \omega_{1:n}}\\
   \nonumber &= {\sigma}^{-1} \sigma_{\partial d}^n(x)\sqrt{\omega_{1:n}^\top \Phi_n\Phi_n^\top (\phi_n\Phi_n^\top+\sigma^2 I)^{-1} \omega_{1:n}}\\
   \nonumber &= {\sigma}^{-1} \sigma_{\partial d}^n(x) \sqrt{\omega_{1:n}^\top K_n(K_n+\sigma^2 I)^{-1}\omega_{1:n}}
   \end{align}
   For an $\alpha$ such that $\alpha \sigma^2=\lambda<1$ it yields 
   \begin{align*}
    & {\sigma}^{-1} \sigma_{\partial d}^n(x) \sqrt{\omega_{1:n}^\top K_n(K_n+\sigma^2 I)^{-1}\omega_{1:n}}\leq \\
    &  \alpha{\lambda}^{-1} \sigma_{\partial d}^n(x) \sqrt{\omega_{1:n}^\top K_n(K_n+\lambda I)^{-1}\omega_{1:n}}\leq \\
    & \alpha\sigma_{\partial d}^n(x) \sqrt{\omega_{1:n}^\top K_n(K_n+\lambda I)^{-1}\omega_{1:n}}
   \end{align*}
 Letting $\lambda=1+\eta$ by following similar steps of \cite{chowdhury2017kernelized} we can write:
 \begin{align*}
     &\omega_{1:n}^\top K_n(K_n+(1+\eta) I)^{-1}\omega_{1:n}\leq \\
     &\omega_{1:n}^\top ((K_n+\eta I)^{-1}+ I)^{-1}\omega_{1:n}
 \end{align*}
 By applying the self-normalized concentration inequality (see Theorem 1 in \cite{chowdhury2017kernelized}), it yields
 \begin{align*}
    &\sqrt{ \omega_{1:n}^\top ((K_n+\eta I)^{-1}+ I)^{-1}\omega_{1:n}}\leq \\
    & R\sqrt{2\ln \frac{\sqrt{\det \big((1+\eta)I +K_n\big)}}{\delta}}
 \end{align*}
 Using the fact that
 \begin{align*}
      \det \big(\lambda I+K_n\big)=\det\big(I+\lambda^ {-1}K_n\big)\det\big(\lambda I\big)
 \end{align*}
it yields that
 \begin{align*}
      &\ln \sqrt{\det \big((1+\eta)I +K_n\big)}=\\
      & \frac{1}{2}\ln \bigg(\det\big(I+(1+\eta)^{-1}K_n\big)\bigg)+\frac{1}{2}n \ln (1+\eta)\leq\\
      & \frac{1}{2}\ln \bigg(\det\big(I+\sigma^{-2}K_n\big)\bigg)+\frac{1}{2}n \eta\leq \gamma_n+\frac{1}{2}\eta n
 \end{align*}
 where in the second inequality we applied the fact that $\lambda>\sigma^2$ and in third inequality we used the definition of information gain by (\ref{eq:maxInfGain}).
 By choosing $\eta=2/T$ where $T$ without loss of generality where $T$ is the horizon and $T>n$ it yields
\begin{align}
    \Gamma_2\leq \alpha \sigma_n(x)  R\sqrt{2(1+\gamma_n+\ln(1/\delta)}.
\end{align}
which completes the proof.
\end{proof}

\textbf{Proof of Lemma \ref{lem:Vliptchitz}}

\begin{proof} 
By having definition of $\dot{\mathcal{V}}$ by Lemma 4, one can write
\begin{align*}
 \bigg|&\frac{\partial V(x)}{\partial x}(f(x)+g(x)\pi(x)+d(x))+K_VV(x)\\
 &-\frac{\partial V(x^\prime)}{\partial x^\prime}(f(x^\prime)+g(x^\prime)\pi(x^\prime)+d(x^\prime))-K_VV(x^\prime)\bigg|\leq \\
 &\bigg|f(x)+g(x)\pi(x)+d(x)\bigg|\;\;\bigg|\frac{\partial V(x)}{\partial x}-\frac{\partial V(x^\prime)}{\partial x^\prime}\bigg|\\
 &+\bigg|\frac{\partial V(x^\prime)}{\partial x^\prime}\bigg|\;\;\bigg|f(x)+g(x)\pi(x)+d(x)\\
 &-f(x^\prime)-g(x^\prime)\pi(x^\prime)-d(x^\prime)\bigg|+K_V\bigg|V(x)-V(x^\prime)\bigg|\leq \\
 &(B_f+B_gB_{\pi}+B_d\|k\|_{\infty})L_{\partial V}(x-x^\prime)+\\
 &L_V(L_f+L_d+B_gL_{\pi}+K_V)(x-x^\prime)
\end{align*}
where we applied
\begin{align*}
    d(x)&=\langle d, k(x,.)\rangle_k\leq \|d\|_k \|k(x,.)\|_k\\
    &=\|d\|_k \sqrt{k(x,x)}\leq \|k\|_{\infty}\|f\|_k
\end{align*}
\end{proof}
\textbf{Proof of Lemma \ref{lem:Liptch const d, dd}}

\begin{proof} 
for $d(x)$ one can write
\begin{align*}
  &|d(x)-d(x^\prime)|=|\langle d,k(x,.) \rangle-\langle d,k(x^\prime,.) \rangle| \leq \\
  & \|d\|_k\|k(x,.)-k(x^\prime,.)\|_k\leq\|d\|_k \sqrt{k(.,x-x^\prime)k(.,x-x^\prime)}\\
  &\leq\|d\|_k \|\frac{\partial k}{\partial x}\|_{\infty}|x-x^\prime|\leq B_d\|\frac{\partial k}{\partial x}\|_{\infty} |x-x^\prime|.
\end{align*}
As for $\frac{\partial d(x)}{\partial x}$, having the reproducing property for $\frac{\partial d(x)}{\partial x}$ by Lemma \ref{lem:DiifReprocuc} we can write
\begin{align*}
  &|\frac{\partial d(x)}{\partial x}-\frac{\partial d(x^\prime)}{\partial x^\prime}|=|\langle d,\frac{\partial k(x,.)}{\partial x} \rangle-\langle d,\frac{\partial k(x^\prime,.)}{\partial x^\prime } \rangle| \leq \\
  & \|d\|_k\sqrt{\langle \frac{\partial k(x,.)}{\partial x}-\frac{\partial k(x^\prime,.)}{\partial x^\prime}, \frac{\partial k(x,.)}{\partial x}-\frac{\partial k(x^\prime,.)}{\partial x^\prime}\rangle}\leq \\
  & B_d\sqrt{\|\frac{\partial ^2k}{\partial x^2}\|_{\infty}^2|x-x^\prime|^2}\leq B_d\|\frac{\partial ^2k}{\partial x^2}\|_{\infty} |x-x^\prime|
\end{align*}
\end{proof}

\textbf{Proof of Lemma \ref{lem:lipstchitzH}}

\begin{proof}
By having definition of $\dot{\mathscr{H}}(x,\pi(x))$ by (\ref{SafetyCond}), one can write
\begin{align*}
    \bigg|&\big(\frac{\partial h(x)}{\partial{x}}+\frac{\partial \mathcal{F}(d(x))}{\partial{d(x)}} \frac{\partial d(x)}{\partial{x}}\big)\bigg(f(x)+d(x)+g(x)\pi(x)\bigg)\\
    &+K_HH(x)-\\
    &\big(\frac{\partial h(x^\prime)}{\partial{x^\prime}}+\frac{\partial \mathcal{F}(d(x^\prime))}{\partial{d(x^\prime)}} \frac{\partial d(x^\prime)}{\partial{x^\prime}}\big)\times\\
    &\bigg(f(x^\prime)+d(x^\prime)+g(x^\prime)\pi(x^\prime)\bigg)-K_HH(x^\prime)\bigg| \\
    &\leq\big|f(x)+d(x)+g(x)\pi(x)\big|\bigg|\frac{\partial h(x)}{\partial x}-\frac{\partial h(x^\prime)}{\partial{x^\prime}}\\
    &+\frac{\partial \mathcal{F}(d(x))}{\partial{d(x)}} \frac{\partial d(x)}{\partial{x}}-\frac{\partial \mathcal{F}(d(x^\prime))}{\partial{d(x^\prime)}} \frac{\partial d(x)}{\partial{x}}\\
    &+\frac{\partial \mathcal{F}(d(x^\prime))}{\partial{d(x^\prime)}} \frac{\partial d(x)}{\partial{x}}-\frac{\partial \mathcal{F}(d(x^\prime))}{\partial{d(x^\prime)}} \frac{\partial d(x^\prime)}{\partial{x^\prime}}\bigg|+\\
    & \bigg|\frac{\partial h(x^\prime)}{\partial{x^\prime}}+\frac{\partial \mathcal{F}(d(x^\prime))}{\partial{d(x^\prime)}} \frac{\partial d(x^\prime)}{\partial{x^\prime}}\bigg| \big|f(x)+d(x)+g(x)\pi(x)-\\
    &f(x^\prime)-d(x^\prime)-g(x^\prime)\pi(x^\prime)\big|+\\
    &|K_H|\bigg|h(x)+\mathcal{F}(d(x))-h(x^\prime)-\mathcal{F}(d(x^\prime))\bigg| \leq \\
   & (B_f+B_gB_{\pi}+B_d\|k\|_{\infty})(L_{\partial h}+L_dL_{\partial \mathcal{F}}+L_{\mathcal{F}}L_{\partial d})|x-x^\prime|\\
   &+(L_h+L_{\mathcal{F}}B_d\|\frac{\partial k}{\partial x}\|_{\infty})(L_f+L_d+B_gL_{\pi})|x-x^\prime|+\\
   &|K_H|(L_h+L_{\mathcal{F}}L_d))|x-x^\prime|
    \end{align*}
\end{proof}

\textbf{Proof of Theorem \ref{tm.Disctetization}}

\begin{proof}

Note that the conditions:

\begin{align*}
u^n_{\dot{\mathcal{V}}}(x, \pi(x)) &\leq -L_{\dot{\mathcal{V}}}\tau, \
l^n_{\dot{\mathcal{H}}}(x, \pi(x)) &\geq L_{\dot{\mathcal{H}}}\tau.
\end{align*}

hold for all $x \in \left(\bar{\mathcal{E}}n(e) \cap \mathcal{C}(c)\right) \cap \mathcal{X}\tau$, where $c>0$, $e\geq 0$, $n\in \mathbb{N}$, and $u=\pi(x)$. These conditions are sufficient to ensure that $u^n_{\dot{\mathcal{V}}}(x, \pi(x)) \leq 0$ and $l^n_{\dot{\mathcal{H}}}(x, \pi(x)) \geq 0$. Consequently, they guarantee the forward invariance of $\mathcal{E} \cap \mathcal{C}(c)$ in the continuous domain $\mathcal{X}$.

Thus, the proof is complete.

\end{proof}

\end{document}